\newcommand{\lcm}{{\rm lcm}}
\newcommand{\ord}{{\mathrm{ord}}}
\newcommand{\Z}{\mathbb{{Z}}}
\newcommand{\gf}{{\mathrm{GF}}}
\newcommand{\h}{{\text{H}}}
\newtheorem{theorem}{Theorem}
\newtheorem{lemma}[theorem]{Lemma}
\newtheorem{proposition}[theorem]{Proposition}
\newtheorem{definition}{Definition}
\newtheorem{example}{Example}
\begin{document}

\title{On Hermitian LCD codes from cyclic codes and their applications to orthogonal direct sum masking}

\author{Chengju~Li~

\thanks{C. Li is with the Shanghai Key Laboratory of Trustworthy Computing, East China Normal University,
Shanghai, 200062, China (email: cjli@sei.ecnu.edu.cn). The work was supported by the NSFC-Zhejiang Joint Fund for the Integration of Industrialization and Informatization under Grant No. U1509219.

}
}

\date{\today}
\maketitle

\begin{abstract}
Cyclic codes are an interesting type of linear codes and have wide
applications in communication and storage systems due to their efficient encoding and
decoding algorithms. It was proved that asymptotically good Hermitian LCD codes exist. The objective of this paper is to construct some cyclic Hermitian LCD  codes over finite fields and analyse their
 parameters. The dimensions of these codes are settled and the lower bounds on their minimum distances are presented. Most Hermitian LCD codes presented in this paper are not BCH codes.
 In addition, we employ Hermitian LCD codes to propose a Hermitian orthogonal direct sum masking scheme that achieves protection against fault injection attacks. It is shown that the codes with
 great minimum distances are desired to improve the resistance.

\end{abstract}

\begin{keywords}
LCD codes, cyclic codes, BCH bound, Hermitian dual, Orthogonal direct sum masking.
\end{keywords}

\section{Introduction}
Throughout this paper, let $\gf(Q)$ be a finite field of size $Q=q^2$, where $q$ is a prime power. Let
$n$ be an integer with $\gcd(n, q)=1$. An $[n,k, d]$ linear code $\mathcal C$ over $\gf(Q)$ is a linear subspace of $\gf(Q)^n$ with dimension $k$ and
minimum (Hamming) distance $d$.
For any $x \in \gf(Q)$, the conjugate of $x$ is defined by $\bar{x}=x^q$ .

\begin{definition} A linear code $\mathcal C$ over $\gf(Q)$ is called a Hermitian LCD
code (linear code with Hermitian complementary dual) if $\mathcal C \oplus \mathcal C^{\perp_\h}=\gf(Q)^n$(i.e.,
$\mathcal C \cap \mathcal C^{\perp_\h}=\{\textbf{0}\}$), where $\mathcal C^{\perp_\h}$ denotes the Hermitian dual of $\mathcal C$ and is defined by
$$\mathcal C^{\perp_\h}=\{(b_0, b_1, \ldots, b_{n-1}) \in \gf(Q)^n: \sum_{i=0}^{n-1}b_i\bar{c_i}=0 \mbox{ for all } (c_0, c_1, \ldots, c_{n-1}) \in \mathcal C\}.$$

Moreover, a linear code $\mathcal C$ over $\gf(Q)$ is called an \emph{Euclidean LCD
code} if $\mathcal C \oplus \mathcal C^{\perp}=\gf(Q)^n$, where $\mathcal C^{\perp}$ is the Euclidean dual of $\mathcal C$.
\end{definition}

A linear code $\mathcal C$ is called \emph{cyclic} if
$(c_0, c_1, \ldots, c_{n-1}) \in \mathcal C$ implies
$(c_{n-1}, c_0, c_1, \ldots, c_{n-2}) \in \mathcal C$. By identifying any vector $(c_0, c_1, \ldots, c_{n-1}) \in \gf(Q)^n$ with
$$c_0+c_1x+c_2x^2+\cdots+c_{n-1}x^{n-1} \in \gf(Q)[x]/(x^n-1),$$
a code $\mathcal C$ of length $n$ over $\gf(Q)$ corresponds to a subset of
$\gf(Q)[x]/(x^n-1)$.
Then $\mathcal C$ is a cyclic code if and only if the
corresponding subset is an ideal of $\gf(Q)[x]/(x^n-1)$. Note that every
ideal of $\gf(Q)[x]/(x^n-1)$ is principal. Then there is a monic polynomial $g(x)$ of the smallest
degree such that $\mathcal C=\langle g(x) \rangle$ and $g(x) \mid (x^n-1)$. In addition, $g(x)$ is unique and called the \emph{generator
polynomial}, and $h(x)=(x^n-1)/g(x)$ is referred to as the \emph{check polynomial} of $\mathcal C$.
Denote $m=\ord_n(Q)$. Let $\alpha$ be a generator of $\gf(Q^m)^*$ and put $\beta=\alpha^{\frac {Q^m-1} n}$. Then
$\beta$ is a primitive $n$-th root of unity. The set $S=\{0 \le i \le n-1 : g(\beta^i)=0\}$ is referred to as the
\emph{defining set} of $\mathcal C$.

\begin{lemma} \label{LBCH} (BCH bound) Let $\mathcal C$ be a cyclic code of length $n$ over $\gf(Q)$ with defining set
$S$. Suppose that $S$ contains $\delta-1$ consecutive elements for
some positive integer $\delta$. Then $d \ge \delta$, where $d$ is the minimum distance of $\mathcal C$.
\end{lemma}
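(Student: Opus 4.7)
The plan is to prove the BCH bound by a standard Vandermonde contradiction: assume there is a nonzero codeword of weight at most $\delta-1$, use the fact that its polynomial vanishes at $\delta-1$ consecutive powers of $\beta$, and derive a homogeneous linear system whose coefficient matrix is a (scaled) Vandermonde matrix, hence invertible, forcing the codeword to be zero.

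More concretely, I would first translate the hypothesis into an assertion about codewords: since $S$ contains $\delta-1$ consecutive integers, say $\{b, b+1, \ldots, b+\delta-2\}$, and $g(\beta^i)=0$ for every $i \in S$, every codeword polynomial $c(x) = \sum_{j=0}^{n-1} c_j x^j$ satisfies $c(\beta^{b+\ell})=0$ for $\ell = 0, 1, \ldots, \delta-2$. Suppose for contradiction that some nonzero $\bc \in \mathcal C$ has Hamming weight $w \le \delta-1$, with support $\{i_1, i_2, \ldots, i_w\}$ and nonzero coordinates $c_{i_1}, \ldots, c_{i_w}$. The $w$ vanishing conditions $c(\beta^{b+\ell})=0$ can then be written as a square homogeneous linear system
\[
\sum_{t=1}^{w} c_{i_t} \beta^{(b+\ell) i_t} = 0, \qquad \ell = 0, 1, \ldots, w-1,
\]
in the unknowns $c_{i_1}, \ldots, c_{i_w}$ (padding with additional equations from the remaining consecutive exponents if $w < \delta-1$ and then truncating to a $w \times w$ subsystem).

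The next step is to factor the coefficient matrix as $D \cdot V$, where $D = \mathrm{diag}(\beta^{b i_1}, \ldots, \beta^{b i_w})$ and $V$ is the Vandermonde matrix with $(\ell, t)$-entry $(\beta^{i_t})^{\ell}$. Since $\beta$ is a primitive $n$-th root of unity and the $i_t$ are distinct elements of $\{0, 1, \ldots, n-1\}$, the values $\beta^{i_t}$ are pairwise distinct, so $\det V = \prod_{s < t}(\beta^{i_t} - \beta^{i_s}) \neq 0$. The diagonal factor $D$ is obviously invertible, hence $DV$ is invertible and the only solution is $c_{i_1}=\cdots=c_{i_w}=0$, contradicting the assumption that $\bc \neq \bone$-supported nontrivially.

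I do not foresee a genuine obstacle here: the only subtlety is bookkeeping, namely making sure the cyclic shift $b$ is handled correctly (absorbed into the diagonal factor) and that $\beta$ being a primitive $n$-th root of unity really does force the $\beta^{i_t}$ to be distinct, which follows from $0 \le i_1 < \cdots < i_w \le n-1$. One could equivalently phrase the argument via the parity-check matrix whose rows are $(1, \beta^{b+\ell}, \beta^{2(b+\ell)}, \ldots, \beta^{(n-1)(b+\ell)})$ for $\ell = 0, \ldots, \delta-2$ and invoke the fact that any $\delta-1$ of its columns are linearly independent; both formulations give $d \ge \delta$.
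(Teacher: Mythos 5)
The paper states this lemma without proof --- it is the classical BCH bound, quoted as a known result from the standard coding-theory literature --- so there is no in-paper argument to compare against. Your Vandermonde argument is the standard and correct proof: a nonzero codeword of weight $w\le\delta-1$ would satisfy a $w\times w$ homogeneous system whose matrix factors into an invertible diagonal matrix and a Vandermonde matrix on the pairwise-distinct values $\beta^{i_1},\dots,\beta^{i_w}$ (distinct because $\beta$ has order $n$), forcing the codeword to vanish. The only nitpick is that your diagonal factor scales the \emph{columns}, so the factorization should read $V\cdot D$ rather than $D\cdot V$; this is immaterial since $\det(VD)=\det V\det D\ne 0$ either way.
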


Let $m_i(x)$ denote the minimal polynomial of $\beta^i$ over $\gf(Q)$. We use $i \bmod n$ to denote
the unique integer in the set $\{0, 1, \ldots, n-1\}$, which is congruent to $i$ modulo $n$.
 We also assume that $m_i(x):=m_{i \bmod n}(x)$. We begin to recall the definition of BCH code. For any integer $\delta$ with $2 \leq \delta \leq n$, define
\begin{equation} \label{GPoly} g_{(n,\delta,b)}(x)=\lcm(m_{b}(x), m_{b+1}(x), \cdots, m_{b+\delta-2}(x)),\end{equation}
where $\lcm$ denotes the least common multiple of these polynomials.
Let $\mathcal C_{(n,\delta,b)}$ denote the cyclic code of length $n$ with generator
polynomial $g_{(n,\delta,b)}(x)$.
Then $\mathcal C_{(n,\delta,b)}$ is called a BCH code with \emph{designed distance} $\delta$,
and $S=\{uQ^\ell: b \le u \le b+\delta-2, 0 \le \ell \le m-1\}$ is the
\emph{defining set} of $\mathcal C_{(n,\delta,b)}$.
We call $\mathcal C_{(n,\delta,b)}$  a \emph{narrow-sense BCH code} if $b=1$ and \emph{non-narrow-sense BCH code} otherwise. When $n=Q^m-1$,
$\mathcal C_{(n,\delta,b)}$ is called  a \emph{primitive BCH code}. For more information on BCH codes, we refer the reader to \cite{Charpin, DDZ15, HP03, MS77}.

So far, the Euclidean LCD codes have been well investigated \cite{DLL}, \cite{DKOSS}, \cite{EY09}, \cite{Massey64}, \cite{Massey92}, \cite{MTQ}, \cite{ML86}, \cite{Sendr}, \cite{TH70}, \cite{YM94}, \cite{ZPS}, which include their properties and applications. It should be remarked that \cite{DLL} gave a well rounded treatment of Euclidean LCD cyclic codes over finite fields.
For Hermitian LCD codes, there are a few references to study them, though it was proved that they are asymptotically good \cite{GOS}.
 Boonniyoma and Jitman gave a sufficient and necessary condition on Hermitian LCD codes by employing their generator matrices \cite{BJ}.

Let $f(x)=f_tx^t+f_{t-1}x^{t-1}+\cdots+f_1x+f_0$ be a polynomial over $\gf(Q)$ with $f_t \ne 0$ and $f_0 \ne 0$. The reciprocal $f^*(x)$
of $f(x)$ is defined by
$$f^*(x)=f_0^{-1}x^tf(x^{-1}).$$
Denote
$$\bar{f}(x)=f_t^qx^t+f_{t-1}^qx^{t-1}+\cdots+f_1^qx+f_0^q.$$
It is easy to check that two operations $*$ and $\bar{}$ are commutative, i.e., $\overline{(f^*)}(x)=(\bar{f})^*(x)$.
Very recently, G\"{u}neri, \"{O}zkaya, and Sol\'{e} \cite{GOS} presented a nice description of cyclic Hermitian LCD  codes over finite fields.

\begin{lemma} \cite{GOS} \label{LHLCD} Let $\mathcal C$ be a cyclic code over $\gf(Q)$ with generator polynomial $g(x)$. Then $\mathcal C$ is a Hermitian LCD code if and only if $g(x)=\bar{g}^*(x)$.
\end{lemma}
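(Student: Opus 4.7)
My plan is to compute the generator polynomial of $\mathcal C^{\perp_\h}$ explicitly, convert the Hermitian LCD condition into a polynomial condition via the roots picture, and then conclude by a degree argument.

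First, I would observe directly from the definition that $\mathcal C^{\perp_\h}=(\overline{\mathcal C})^\perp$, where $\overline{\mathcal C}=\{(\bar c_0,\ldots,\bar c_{n-1}):(c_0,\ldots,c_{n-1})\in\mathcal C\}$ and $\perp$ denotes the Euclidean dual: indeed, $\sum b_i\bar c_i=0$ for all $\bc\in\mathcal C$ says exactly that $\mathbf b$ is Euclidean-orthogonal to every element of $\overline{\mathcal C}$. Because coefficient-wise conjugation commutes with cyclic shifts, $\overline{\mathcal C}$ is cyclic with monic generator polynomial $\bar g(x)$, and conjugating $g(x)h(x)=x^n-1$ shows that its check polynomial is $\bar h(x)$. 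Applying the classical fact that the Euclidean dual of a cyclic code with check polynomial $k(x)$ is generated by the monic reciprocal $k^*(x)$, one obtains that the generator polynomial of $\mathcal C^{\perp_\h}$ is $\bar h^*(x)$. All polynomials arising here are monic, so no normalization concerns appear.

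Next, I would translate $\mathcal C\cap\mathcal C^{\perp_\h}=\{\bzero\}$ into a polynomial equation. Viewing cyclic codes as ideals of $\gf(Q)[x]/(x^n-1)$, the intersection $\langle g\rangle\cap\langle\bar h^*\rangle$ equals $\langle \lcm(g,\bar h^*)\rangle$, which is the zero ideal iff $\lcm(g,\bar h^*)=x^n-1$. Using $\deg g+\deg\bar h^*=\deg g+\deg h=n$ and the identity $\gcd(g,\bar h^*)\cdot\lcm(g,\bar h^*)=g\,\bar h^*$, this condition is equivalent to $\gcd(g,\bar h^*)=1$.

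Finally, I would analyze this gcd via roots. Applying $*$ to $\bar g(x)\bar h(x)=x^n-1$ and using $(x^n-1)^*=x^n-1$ together with multiplicativity of $*$ yields $\bar g^*(x)\bar h^*(x)=x^n-1$. Hence the $n$-th roots of unity that are roots of $\bar h^*$ are exactly those which are not roots of $\bar g^*$, so $\gcd(g,\bar h^*)=1$ iff every root of $g$ is a root of $\bar g^*$, i.e., iff $g(x)\mid\bar g^*(x)$. Since $\deg g=\deg\bar g^*$ and both are monic, this divisibility forces $g(x)=\bar g^*(x)$, as claimed. The main obstacle is the bookkeeping: correctly identifying the Hermitian dual's generator as $\bar h^*(x)$ (not $h^*(x)$ or $\bar g^*(x)$) by composing conjugation and reciprocation in the right order, and tracking monicity so that the final ``divides implies equals'' step is valid.
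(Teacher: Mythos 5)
Your proof is correct and complete: identifying $\mathcal C^{\perp_\h}=(\overline{\mathcal C})^{\perp}$ with generator polynomial $\bar{h}^*(x)$, reducing the LCD condition to $\gcd(g,\bar h^*)=1$ via the lcm/degree count, and then to $g=\bar g^*$ using $\bar g^*\bar h^*=x^n-1$ and the distinctness of the $n$-th roots of unity (which the paper's standing hypothesis $\gcd(n,q)=1$ guarantees). The paper itself states this lemma as a citation from \cite{GOS} without proof, and your argument is the standard one, fully consistent with the factorization of $x^n-1$ that the paper later exploits in Theorem \ref{TH}.
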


The objective of this paper is to construct several classes of cyclic Hermitian LCD  codes over $\gf(Q)$ by employing cyclic codes and analyse their parameters. The dimensions of these codes are settled. The main difficulty in determining their dimensions is to use $q$-adic expansion to investigate the structure of $Q$-cyclotomic cosets, which is extremely complex. It is one of the major differences between
this paper and references \cite{DLL}, \cite{LDL} and \cite{LDLiu}. The other difference is that we extend some constructions in these papers. It might be true that the dimensions of cyclic Hermitian LCD  codes are not available in the literature. We remark that most Hermitian LCD codes presented in this paper are not BCH codes in general.
The lower bounds on minimum distances of the codes constructed in this paper are presented.
 In addition, we employ Hermitian LCD codes to propose a Hermitian orthogonal direct sum masking scheme that achieves protection against fault injection attacks.
 It will be shown that the Hermitian LCD codes with great minimum distances are desired, which can improve the resistance.

\section{$Q$-cyclotomic cosets modulo $n$}

Assume that $\gcd(n, Q)=1$. To deal with cyclic codes of length $n$ over $\gf(Q)$, we need to introduce $Q$-cyclotomic cosets modulo $n$ and the canonical factorization of $x^n-1$
over $\gf(Q)$.

Let $\Bbb Z_n = \{0,1,2, \cdots, n-1\}$ denote the ring of integers modulo $n$. For any $s \in \Z_n$, the \emph{$Q$-cyclotomic coset of $s$ modulo $n$\index{$Q$-cyclotomic coset modulo $n$}} is defined by
$$C_s=\{s, sQ, sQ^2, \cdots, sQ^{\ell_s-1}\} \bmod n \subseteq \Z_n,
$$
where $\ell_s$ is the smallest positive integer such that $s \equiv s Q^{\ell_s} \pmod{n}$, and is the size of the
$Q$-cyclotomic coset. The smallest integer in $C_s$ is called the \emph{coset leader\index{coset leader}} of $C_s$.
Let $\Gamma_{(n,Q)}$ be the set of all the coset leaders. We have then $C_s \cap C_t = \emptyset$ for any two
distinct elements $s$ and $t$ in  $\Gamma_{(n,Q)}$, and
\begin{eqnarray}\label{eqn-cosetPP}
\bigcup_{s \in  \Gamma_{(n,Q)} } C_s = \Z_n.
\end{eqnarray}
Hence, the distinct $Q$-cyclotomic cosets modulo $n$ partition $\Z_n$.

Let $m=\ord_{n}(Q)$, and let $\alpha$ be a generator of $\gf(Q^m)^*$. Put $\beta=\alpha^{(Q^m-1)/n}$.
Then $\beta$ is a primitive $n$-th root of unity in $\gf(Q^m)$. The minimal
polynomial $m_{s}(x)$ of $\beta^s$ over $\gf(Q)$ is the monic polynomial of the smallest degree over
$\gf(Q)$ with
$\beta^s$ as a zero. It is now straightforward to prove that this polynomial is given by
\begin{eqnarray*}
m_{s}(x)=\prod_{i \in C_s} (x-\beta^i) \in \gf(Q)[x],
\end{eqnarray*}
which is irreducible over $\gf(Q)$. It then follows from (\ref{eqn-cosetPP}) that
\begin{eqnarray*}\label{eqn-canonicalfact}
x^n-1=\prod_{s \in  \Gamma_{(n,Q)}} m_{s}(x)
\end{eqnarray*}
which is the factorization of $x^n-1$ into irreducible factors over $\gf(Q)$.

\section{Hermitian LCD codes from cyclic codes}

For any monic irreducible factor $f(x)$ of $x^n-1 \in \gf(Q)[x]$, $\bar{f}^*(x)$ is also a  monic irreducible factor of $x^n-1$.
It is easy to see that
\begin{equation} \label{EF} x^n-1=e_1(x)e_2(x)\cdots e_u(x)f_1(x)\bar{f}^*_1(x)f_2(x)\bar{f}^*_2(x)\cdots f_v(x)\bar{f}^*_v(x),\end{equation}
where $e_i(x)$ and $f_j(x)$ are monic irreducible polynomials over $\gf(Q)$, and $e_i(x)=\bar{e}^*_i(x)$ for $i=1, 2, \ldots, u$.
The following conclusion on the construction and the total number of the cyclic Hermitian LCD  codes of length $n$ then follows.

\begin{theorem} \label{TH}
 Let $\mathcal C$ be a cyclic code of length $n$ over $\gf(Q)$ with generator polynomial
 $$g(x)=e_1(x)^{a_1}e_2(x)^{a_2}\cdots e_u(x)^{a_u}f_1(x)^{b_1}\bar{f}^*_1(x)^{c_1}f_2(x)^{b_2}\bar{f}^*_2(x)^{c_2}\cdots f_v(x)^{b_v}\bar{f}^*_v(x)^{c_v},$$
 where $a_i, b_j,c_j \in \{0, 1\}$. Then $\mathcal C$ is a Hermitian LCD code if and only if $b_j=c_j$ for all $j=1, 2, \ldots, v$.
  Thus the total number of cyclic Hermitian LCD  codes of length $n$ is $2^{u+v}$ (Here $\{\textbf{0}\}$ and $\gf(Q)^n$ are also viewed
 as Hermitian LCD codes).
\end{theorem}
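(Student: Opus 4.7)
The plan is to apply Lemma \ref{LHLCD} directly: $\mathcal{C}$ with monic generator polynomial $g(x)$ is Hermitian LCD iff $g(x)=\bar{g}^*(x)$. So the task reduces to computing $\bar{g}^*(x)$ from the given factorization of $g(x)$ and comparing it against $g(x)$ factor by factor, using the unique factorization in $\gf(Q)[x]$.

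First I would establish the algebraic properties of the two operations $\bar{\phantom{f}}$ and $*$. Since $Q=q^2$ and $\overline{x}=x^q$, for coefficients in $\gf(Q)$ one has $\overline{\overline{a}}=a^{Q}=a$, so $\bar{\phantom{f}}$ is an involution on $\gf(Q)[x]$; it is also multiplicative: $\overline{fh}=\bar{f}\bar{h}$. Moreover, for monic polynomials with nonzero constant term, $*$ is multiplicative and involutive: $(fh)^*=f^*h^*$ and $(f^*)^*=f$. The paper already observes that $\bar{\phantom{f}}$ and $*$ commute, so the composed operation $f\mapsto \bar{f}^*$ is a multiplicative involution on the set of monic polynomials dividing $x^n-1$. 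In particular, applied to $\bar{f}_j^*$ it returns $f_j$.

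Next I would use multiplicativity to compute
\[
\bar{g}^*(x)=\prod_{i=1}^{u}\bigl(\bar{e}_i^*(x)\bigr)^{a_i}\prod_{j=1}^{v}\bigl(\bar{f}_j^*(x)\bigr)^{b_j}\Bigl(\overline{\bar{f}_j^*}^{\,*}(x)\Bigr)^{c_j}=\prod_{i=1}^{u}e_i(x)^{a_i}\prod_{j=1}^{v}\bar{f}_j^*(x)^{b_j}f_j(x)^{c_j},
\]
invoking the assumption $e_i=\bar{e}_i^*$ for each $i$ and the involution property for the $f_j$ pairs. Thus the equation $g=\bar{g}^*$ becomes
\[
\prod_{j=1}^{v}f_j(x)^{b_j}\bar{f}_j^*(x)^{c_j}=\prod_{j=1}^{v}f_j(x)^{c_j}\bar{f}_j^*(x)^{b_j},
\]
after cancelling the common $e_i$ factors. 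The decomposition (\ref{EF}) ensures that the polynomials $f_1,\bar{f}_1^*,\dots,f_v,\bar{f}_v^*$ are $2v$ pairwise distinct monic irreducibles (the self-reciprocal-conjugate ones have been isolated into the $e_i$ list). By unique factorization in $\gf(Q)[x]$, equating exponents forces $b_j=c_j$ for every $j=1,\dots,v$. Conversely, if $b_j=c_j$ for all $j$, direct substitution shows $g=\bar{g}^*$ and Lemma \ref{LHLCD} gives Hermitian LCD.

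For the enumeration, a cyclic code of length $n$ has a squarefree generator polynomial (since $\gcd(n,Q)=1$ makes $x^n-1$ squarefree), so every cyclic code has a generator of exactly the form in the statement with $a_i,b_j,c_j\in\{0,1\}$. The Hermitian LCD constraint leaves the $a_i\in\{0,1\}$ free ($2^u$ choices) and couples $b_j=c_j\in\{0,1\}$ pairwise ($2^v$ choices), giving the asserted total $2^{u+v}$, with $(a_i)=(b_j)=0$ yielding $\mathcal{C}=\gf(Q)^n$ and all exponents $1$ yielding $\mathcal{C}=\{\mathbf{0}\}$. The only step that warrants real care is the second one, namely verifying that the $\bar{\phantom{f}}$ and $*$ operations distribute correctly over products of monic polynomials and that $\bar{f}_j^*\neq f_j$ and is distinct from $\bar{f}_k^*,f_k$ for $k\neq j$; this pairwise distinctness is precisely what the grouping in (\ref{EF}) is designed to encode, so the argument goes through cleanly.
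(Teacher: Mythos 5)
Your proof is correct and follows exactly the route the paper intends: the paper states Theorem \ref{TH} without proof as an immediate consequence of Lemma \ref{LHLCD} and the factorization \eqref{EF}, and your argument simply fills in the details (multiplicativity and involutivity of the two operations, unique factorization of the squarefree polynomial $x^n-1$, and the resulting exponent count). Nothing is missing and no step deviates from the paper's approach.
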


Theorem \ref{TH} indicates that the number of cyclic Hermitian LCD  codes of length $n$ is determined by \eqref{EF}.It also gives us a hint to
choose the defining set $S$ of the cyclic Hermitian LCD  codes, which should have the following type:
$$S=\{0 \le i \le n-1: g(\beta^i)=0\},$$
where $g(x)=\prod e_i(x) \prod f_j(x)\bar{f}^*_j(x)$ and $\beta$ is a primitive $n$-th root of unity.

\begin{theorem} \label{thm-DS} Let $\mathcal C$ be a cyclic code over $\gf(Q)$ with generator polynomial $g(x)$. Then $\mathcal C$ is a Hermitian LCD code if and only if one of the following statements holds:
\begin{enumerate}
  \item $S=-qS$ for the defining set $S$ of $\mathcal C$, where $-qS=\{-qs: s \in S\}$.
  \item $\beta^{-q}$ is a root of $g(x)$ for every root $\beta$ of $g(x)$.
\end{enumerate}
\end{theorem}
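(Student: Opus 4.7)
The plan is to invoke Lemma \ref{LHLCD} and translate the criterion $g(x)=\bar{g}^*(x)$ into a condition on the defining set $S$. Once that translation is in hand, both statements (1) and (2) fall out as straightforward reformulations.

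First, I would compute how the two operations $*$ and $\bar{\phantom{h}}$ act on the roots. If the roots of $g(x)$ in $\gf(Q^m)$ are exactly $\{\beta^i : i \in S\}$, then from $g^*(x)=g_0^{-1}x^t g(x^{-1})$ (with $t=\deg g$) the roots of $g^*(x)$ are the inverses of the roots of $g(x)$, so the defining set of $g^*(x)$ is $-S \bmod n$. Next, for any $h(x)=\sum h_ix^i \in \gf(Q)[x]$, the Frobenius identity $\bar{h}(\gamma^q)=\sum h_i^q \gamma^{qi}=\bigl(\sum h_i \gamma^i\bigr)^q=h(\gamma)^q$ shows that the roots of $\bar{h}(x)$ are the $q$-th powers of the roots of $h(x)$. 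Applying this to $h=g^*$, the roots of $\bar{g}^*(x)$ are $\{\beta^{-qi} : i \in S\}$, so its defining set is $-qS \bmod n$.

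Next I would combine these observations with Lemma \ref{LHLCD}. Since $\gcd(n,Q)=1$ forces $g(x)\mid x^n-1$ to be squarefree, the polynomial $g(x)$ is uniquely determined by its set of roots. Hence $g(x)=\bar{g}^*(x)$ if and only if they share the same set of roots, which happens precisely when $S=-qS \bmod n$. Combining this with Lemma \ref{LHLCD} gives statement (1). Statement (2) is then an immediate rewording: saying that $\beta^{-q}$ is a root of $g(x)$ for every root $\beta$ of $g(x)$ means that for each $i\in S$ the exponent $-qi \bmod n$ again lies in $S$, i.e., $-qS\subseteq S$; and since multiplication by $-q$ is a bijection on $\Z_n$ (because $\gcd(q,n)=1$), the inclusion is an equality, so (2) $\Leftrightarrow$ (1).

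The routine parts are the two root-tracking computations. The only real pitfall is being careful about which direction the Frobenius acts: it is tempting to write that roots of $\bar{h}$ are the $q$-th roots (rather than $q$-th powers) of roots of $h$, and mixing this up would flip a sign in the exponent. I would therefore verify the identity $\bar{h}(\gamma^q)=h(\gamma)^q$ explicitly before composing it with the reciprocal operation, after which everything reduces to bookkeeping of exponents modulo $n$.
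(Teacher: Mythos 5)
Your proposal is correct. The paper itself disposes of this theorem in one line (``straightforward from Theorem \ref{TH}''), i.e.\ it deduces the defining-set criterion from the classification of cyclic Hermitian LCD codes obtained by sorting the irreducible factors of $x^n-1$ into self-$\bar{\ }^*$-invariant factors $e_i$ and conjugate-reciprocal pairs $f_j,\bar f_j^*$ as in \eqref{EF}: the condition $b_j=c_j$ is exactly the statement that the root set of $g$ is closed under $\beta\mapsto\beta^{-q}$. You instead go back one step further and argue directly from Lemma \ref{LHLCD}, tracking how $*$ and $\bar{\ }$ act on roots ($\gamma\mapsto\gamma^{-1}$ and $\gamma\mapsto\gamma^{q}$ respectively) and using squarefreeness of $x^n-1$ (from $\gcd(n,Q)=1$) to identify $g=\bar g^*$ with equality of defining sets. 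The two routes are logically equivalent since Theorem \ref{TH} is itself a repackaging of Lemma \ref{LHLCD}; what your version buys is that it actually supplies the root-tracking details the paper leaves implicit, including the correct direction of the Frobenius identity $\bar h(\gamma^q)=h(\gamma)^q$ and the observation that $-qS\subseteq S$ forces $-qS=S$ because multiplication by $-q$ is a bijection on $\Z_n$. The only point worth making explicit is that $g(0)\ne 0$ (which holds since $g\mid x^n-1$), so that $g^*$ is well defined, monic, and of the same degree as $g$.
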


\begin{proof} It is straightforward from Theorem \ref{TH}. \end{proof}

\begin{theorem} \label{TO} Let $Q=q^2$. Then $-1$ is an odd power of $q$ modulo $n$ if and only if every cyclic code over $\gf(Q)$ of length $n$ is Hermitian LCD.

In particular, every cyclic code $\mathcal C$ of length $n=q^{2t+1}+1$ over $\gf(Q)$ is Hermitian LCD, where $t \ge 0$ is an integer.
\end{theorem}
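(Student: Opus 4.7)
The plan is to reduce the statement to a condition on $Q$-cyclotomic cosets via Theorem \ref{thm-DS}, and then translate that condition into a multiplicative relation between $-1$ and $q$ modulo $n$.

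First I would invoke Theorem \ref{thm-DS}: a cyclic code $\mathcal C$ with defining set $S$ is Hermitian LCD if and only if $S = -qS$. Since any defining set is a union of $Q$-cyclotomic cosets, and conversely every single $Q$-cyclotomic coset $C_s$ arises as the defining set of some cyclic code (by choosing the generator polynomial to be $m_s(x)$), the statement ``every cyclic code of length $n$ over $\gf(Q)$ is Hermitian LCD'' is equivalent to the statement ``$-qC_s = C_s$ for every $s \in \Z_n$.'' Note that $-qC_s$ is itself a $Q$-cyclotomic coset (it contains $-qs$ and is closed under multiplication by $Q = q^2$), namely $C_{-qs}$, so the condition simplifies to $-qs \in C_s$ for every $s \in \Z_n$.

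Next I would unpack this coset-membership condition. By definition of $C_s$, $-qs \in C_s$ means that there exists a nonnegative integer $k$ (possibly depending on $s$) with $-qs \equiv s\, q^{2k} \pmod n$, equivalently $s(q^{2k-1}+1) \equiv 0 \pmod n$ (after noting $\gcd(q,n)=1$). Specializing to $s=1$ forces $q^{2k-1} \equiv -1 \pmod n$ for some $k$, so $-1$ is an odd power of $q$ modulo $n$. Conversely, if $-1 \equiv q^{2k-1} \pmod n$ for some positive integer $k$, then multiplying by $s$ gives $-qs \equiv s\,q^{2k} \pmod n$ uniformly for every $s \in \Z_n$, and the same $k$ witnesses $-qs \in C_s$ for all $s$. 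This yields the equivalence in the first assertion.

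For the particular case $n = q^{2t+1}+1$, the congruence $q^{2t+1} \equiv -1 \pmod n$ is immediate, exhibiting $-1$ as the odd power $q^{2t+1}$ of $q$ modulo $n$, so the first part applies and every cyclic code over $\gf(Q)$ of that length is Hermitian LCD.

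The step I expect to be the most delicate is the justification that $k$ can be chosen \emph{uniformly} in $s$ in the ``if'' direction (rather than depending on $s$), but this falls out immediately from the identity $-qs \equiv s \cdot q^{2k} \pmod n$ once $-1 \equiv q^{2k-1} \pmod n$ is established; there is no genuine obstruction, and the rest of the argument is bookkeeping on cyclotomic cosets.
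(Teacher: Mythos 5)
Your proposal is correct and takes essentially the same route as the paper: both reduce the claim via Theorem \ref{thm-DS} (equivalently Theorem \ref{TH}) to the condition $-qs \in C_s$ for all $s$, prove the ``if'' direction by the computation $-qs \equiv sQ^{t+1} \pmod n$, and prove the ``only if'' direction by specializing to the coset $C_1$ (the paper phrases this as taking the cyclic code with generator polynomial $m_1(x)$). No gaps.
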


\begin{proof}
Let $-1 \equiv q^{2t+1} \pmod n$ for some integer $t \ge 0$. Then for every integer $a$ with $0 \le a \le n-1$ we have
$$-a \equiv a q^{2t+1} \pmod n \text{ and } -qa \equiv a q^{2t+2} \equiv aQ^{t+1} \pmod n.$$
We then see that $-qa \in C_a$, i.e., $f_j(x)$ do not appear in \eqref{EF}.  By Theorem \ref{TH}, every cyclic code over $\gf(Q)$ of length $n$ is Hermitian LCD.

Let $\beta$ be a primitive $n$-th root of unity and $m_1(x)$ the minimal polynomial of $\beta$ over $\gf(Q)$. Suppose that $\mathcal C$ is a cyclic code over $\gf(Q)$ with
generator polynomial $m_1(x)$. Note that $\mathcal C$ is Hermitian LCD. Then
$$-q \in C_1 \text{ and } -q \equiv Q^t \pmod n$$ for some integer $t$ with $1 \le t \le m-1$. Note that $\gcd(n,q)=1$. Thus $-1 \equiv q^{2t-1} \pmod n$.
\end{proof}

The following lemma, which was given by Aly, Klappenecker, and Sarvepalli, will be employed later.

\begin{lemma} \cite{AKS} \label{lem-AKS}
Let $n$ be a positive integer such that $\gcd(n, Q)=1$ and $Q^{\lfloor m/2 \rfloor}<n \leq Q^m-1$, where
$m=\ord_n(Q)$. Then the $Q$-cyclotomic coset $C_s=\{sQ^j \bmod{n}: 0 \leq j \leq m-1\}$ has cardinality
$m$ for all $s$ in the range $1 \leq s \leq n Q^{\lceil m/2 \rceil}/(Q^m-1)$. In addition, every $s$ with
$s \not\equiv 0 \pmod{Q}$ in this range is a coset leader.
\end{lemma}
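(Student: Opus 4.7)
The plan is to establish both assertions by contradiction, leaning on the tightness of the hypothesis $s\leq nQ^{\lceil m/2\rceil}/(Q^{m}-1)$.

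For the cardinality, I would first note that $\ell_{s}:=|C_{s}|$ is the smallest positive integer with $n\mid s(Q^{\ell_{s}}-1)$ and must divide $m=\ord_{n}(Q)$. If $\ell_{s}<m$, then $\ell_{s}$ is a proper divisor of $m$, hence $\ell_{s}\leq\lfloor m/2\rfloor$, and the divisibility $n\mid s(Q^{\ell_{s}}-1)$ forces $s\geq n/(Q^{\ell_{s}}-1)\geq n/(Q^{\lfloor m/2\rfloor}-1)$. Cross-multiplying, the inequality $n/(Q^{\lfloor m/2\rfloor}-1)>nQ^{\lceil m/2\rceil}/(Q^{m}-1)$ is equivalent to $Q^{\lceil m/2\rceil}>1$, which is immediate. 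This contradicts the hypothesis on $s$ and settles the cardinality assertion.

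For the coset-leader claim, suppose $s$ is not a coset leader, so some $j\in[1,m-1]$ satisfies $sQ^{j}\bmod n<s$. I would split on the size of $j$. If $j\leq\lfloor m/2\rfloor-1$, the upper bound gives $sQ^{j}\leq nQ^{j+\lceil m/2\rceil}/(Q^{m}-1)\leq nQ^{m-1}/(Q^{m}-1)<n$, so $sQ^{j}\bmod n=sQ^{j}\geq sQ>s$, a contradiction. If instead $j\geq\lceil m/2\rceil$, set $i=m-j\in[1,\lfloor m/2\rfloor]$ and $t=sQ^{j}\bmod n$; multiplying $sQ^{j}\equiv t\pmod{n}$ by $Q^{i}$ and using $Q^{m}\equiv 1\pmod{n}$ yields $tQ^{i}=kn+s$ for some integer $k\geq 0$. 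The case $k=0$ gives $s=tQ^{i}$ with $i\geq 1$, forcing $Q\mid s$ and contradicting the hypothesis. For $k\geq 1$, combining $tQ^{i}\geq n+s$ with $t\leq s-1$ produces $s>n/(Q^{i}-1)$; and the elementary fact $(Q^{m}-1)/(Q^{i}-1)>Q^{m-i}\geq Q^{\lceil m/2\rceil}$ (valid for $1\leq i\leq\lfloor m/2\rfloor$) rearranges to $n/(Q^{i}-1)>nQ^{\lceil m/2\rceil}/(Q^{m}-1)$, again contradicting the bound on $s$.

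The main obstacle is the remaining middle value $j=\lfloor m/2\rfloor$, which is a genuinely new case when $m$ is odd. Here $sQ^{j}\leq nQ^{m}/(Q^{m}-1)<2n$, so the only way to have $sQ^{j}\bmod n<s$ is $sQ^{j}-n<s$ with $sQ^{j}\geq n$; equivalently, $s$ lies in the bad interval $I=[n/Q^{j},\,nQ^{\lceil m/2\rceil}/(Q^{m}-1)]$. Its width equals $n/(Q^{j}(Q^{m}-1))\leq 1/Q^{j}<1$, so $I$ contains at most the single integer $s_{1}=\lceil n/Q^{j}\rceil$. Writing $s_{1}Q^{j}=n+r$ with $0\leq r<Q^{j}$, the condition $s_{1}\leq nQ^{\lceil m/2\rceil}/(Q^{m}-1)$ rearranges (using $\lfloor m/2\rfloor+\lceil m/2\rceil=m$) to $s_{1}\geq rQ^{\lceil m/2\rceil}$, while $n\leq Q^{m}-1$ forces $s_{1}\leq Q^{\lceil m/2\rceil}$; since $\gcd(n,Q)=1$ and $j\geq 1$ rule out $r=0$, we must have $r=1$, $n=Q^{m}-1$, and $s_{1}=Q^{\lceil m/2\rceil}$. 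Since $Q\mid Q^{\lceil m/2\rceil}$, the hypothesis $Q\nmid s$ excludes $s=s_{1}$, so $s\notin I$. Isolating this unique troublesome integer is the delicate step; once the width estimate is in hand, the remaining arithmetic bookkeeping is routine.
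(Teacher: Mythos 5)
The paper does not prove this lemma at all---it is quoted verbatim from the cited reference [AKS]---so there is no in-paper argument to compare against. Your proof is correct and is essentially the standard argument from that source: the cardinality claim via $\ell_s\mid m$, $\ell_s\le\lfloor m/2\rfloor$ and $n\mid s(Q^{\ell_s}-1)$ forcing $s\ge n/(Q^{\lfloor m/2\rfloor}-1)$, and the coset-leader claim via the case split on $j$, with the extra odd-$m$ case $j=\lfloor m/2\rfloor$ handled by the width-less-than-one interval trick; I checked the arithmetic in all three cases and in the isolation of $s_1=Q^{\lceil m/2\rceil}$, $r=1$, $n=Q^m-1$, and it goes through. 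Two small points you leave implicit: in the case $j\ge\lceil m/2\rceil$ the assertions $t\ge 1$ and $k\ge 0$ both rest on $s<n$, which holds because $Q^{\lceil m/2\rceil}<Q^m-1$ whenever $m\ge 2$ (for $m=1$ the coset-leader statement is degenerate anyway, since the range $1\le j\le m-1$ is empty); it would be worth saying so explicitly.
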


Denote $m=\ord_n(Q)$, where $n=q^{2t+1}$ and $Q=q^2$. It is easy to check that $m=2t+1$. By Lemma \ref{lem-AKS}, we see that
$i$ is the coset leader of $Q$-cyclotomic coset $C_i$ modulo $q^{2t+1}$, where $1 \le i \le q$.

\begin{theorem} \label{thm-HOP}
For $q=2$ and $Q=4$, let $\mathcal C_{(2^{2t+1}+1, 4, 0)}$ be the cyclic code with generator polynomial $g_{(2^{2t+1}+1, 4, 0)}$ given by
\eqref{GPoly}. Then $\mathcal C_{(2^{2t+1}+1, 4, 0)}$ is a quaternary Hermitian LCD code with parameters
$$[2^{2t+1}+1, 2^{2t+1}-4t-2, d \ge 6].$$
\end{theorem}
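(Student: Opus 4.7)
The plan is to verify three items in turn: the Hermitian LCD property, the dimension, and the minimum distance bound.

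The Hermitian LCD property is immediate from Theorem~\ref{TO}: since $n = q^{2t+1}+1$ with $q=2$, every cyclic code over $\gf(Q)$ of length $n$ is Hermitian LCD, so in particular $\mathcal C_{(n,4,0)}$ is.

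For the dimension, the defining set is $S = C_0\cup C_1\cup C_2$, so $k = n-|S|$. The first step is to show $m=\ord_n(Q)=2t+1$. By construction $q^{2t+1}\equiv -1\pmod n$, while $q^j\not\equiv\pm 1\pmod n$ for $0<j<2t+1$ since $q^j+1<n$; hence $\ord_n(q)=2(2t+1)$ and $\ord_n(Q)=2t+1$. With $m=2t+1$ in hand, a direct check of the range $1\le s\le nQ^{\lceil m/2\rceil}/(Q^m-1)=2^{2t+2}/(2^{2t+1}-1)>2$ shows that $1$ and $2$ both satisfy the hypotheses of Lemma~\ref{lem-AKS}, so each is a coset leader with $|C_1|=|C_2|=2t+1$. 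Distinct coset leaders give disjoint cosets, so $|S|=1+(2t+1)+(2t+1)=4t+3$ and the dimension equals $2^{2t+1}-4t-2$.

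For the minimum distance, the idea is to locate five consecutive integers modulo $n$ inside $S$ and invoke the BCH bound (Lemma~\ref{LBCH}). Using $q=2$, the identity $-1\equiv 2^{2t+1}=2Q^t\pmod n$ places $n-1$ into $C_2$, and $-2\equiv 2^{2t+2}=Q^{t+1}\pmod n$ places $n-2$ into $C_1$. Combining with the evident memberships $0\in C_0$, $1\in C_1$, $2\in C_2$ shows that the block $n-2,\,n-1,\,0,\,1,\,2$ lies in $S$, which is a run of five consecutive elements modulo $n$; hence $d\ge 6$.

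The principal technical pinch point is the dimension step, specifically pinning down $m=2t+1$ and checking that Lemma~\ref{lem-AKS} is genuinely applicable to both $s=1$ and $s=2$; once that is done, the remaining work is a pair of short congruence calculations that place $n-1$ and $n-2$ in the appropriate cyclotomic cosets, after which the BCH bound finishes the argument.
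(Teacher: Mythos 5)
Your proposal is correct and follows essentially the same route as the paper: Hermitian LCD via Theorem~\ref{TO}, dimension via Lemma~\ref{lem-AKS} applied to the disjoint union $S=C_0\cup C_1\cup C_2$, and $d\ge 6$ from the five consecutive elements $\{n-2,n-1,0,1,2\}\subseteq S$ together with the BCH bound. You simply supply details the paper leaves implicit (the verification that $\ord_n(Q)=2t+1$, the check of the range in Lemma~\ref{lem-AKS}, and the explicit congruences placing $n-1\in C_2$ and $n-2\in C_1$).
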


\begin{proof}
One can easily see that $\mathcal C_{(2^{2t+1}+1, 4, 0)}$ is a Hermitian LCD code and its defining set is
$$S=C_0 \cup C_1 \cup C_2,$$
which is a disjoint union. By Lemma \ref{lem-AKS}, we have $|C_1|=|C_2|=2t+1$ and $|S|=4t+3$. Then the dimension of
$\mathcal C_{(2^{2t+1}+1, 4, 0)}$ follows. Notice that $\{n-2, n-1, 0, 1, 2\}$ is a subset of $S$ from Theorem \ref{thm-DS}.
The lower bound of minimum distance then follows form the BCH bound.
\end{proof}

The tables of
best known linear codes are referred to as the \emph{Database} later, which are maintained by Markus Grassl at http://www.codetables.de/.

\begin{example} \label{exa-1}
Let $q=2$ and $Q=4$. Let $\mathcal C_{(2^{2t+1}+1, 4, 0)}$ be the quaternary Hermitian LCD code presented in Theorem \ref{thm-HOP}.
\begin{enumerate}
  \item When $t=1$, the quaternary code $\mathcal C_{(9, 4, 0)}$ has parameters $[9, 2, 6]$, which is almost optimal according to the Database.
  \item When $t=2$, the quaternary code $\mathcal C_{(33, 4, 0)}$ has parameters $[33, 22, 6]$, which is almost optimal according to the Database.
  \item When $t=3$, the quaternary code $\mathcal C_{(129, 4, 0)}$ has parameters $[129, 114, 6]$, which is an optimal code according to the Database.
\end{enumerate}
\end{example}

\section{cyclic Hermitian LCD  codes of length $Q^m-1$}

In this section, we always assume that $n=Q^m-1$, $\alpha$ is a generator of $\gf(Q^m)^*$, and $m_i(x)$ is the minimal polynomial of
$\alpha^i$ over $\gf(Q)$. Below we construct a class of cyclic Hermitian LCD  codes via BCH codes and investigate their fundamental parameters.

Let $e$ be a divisor of $q+1$, i.e., $e \mid (q+1) \mid n$. Write $\hat n=\frac n e$. Then $\alpha^{\hat n} \in \gf(Q)$.
For any integer $\delta$ with $2 \leq \delta \leq n$, denote
\begin{equation} \label{G1} g(x)=(x-\alpha^{\hat n})g_{(n, \delta,{\hat n}+1)}(x)\lcm\big(m_{{\hat n}-q(\delta-1)}(x),m_{{\hat n}-q(\delta-2)}(x),\ldots, m_{{\hat n}-q}(x)\big) \in \gf(Q)(x),\end{equation}
where $g_{(n, \delta,\hat n+1)}(x)$ is defined by \eqref{GPoly}. Let $\mathcal C$ be a cyclic code of length $n$ with generator
polynomial $g(x)$. One can see that the defining set of $\mathcal C$ is
$$S=\bigcup_{i=1}^{\delta-1}(C_{\hat n+i} \cup C_{{\hat n}-qi}) \bigcup C_{\hat n}.$$
It is clear that
$$-q(\hat n+i) \equiv -q\hat n-qi \equiv \hat n-qi \pmod n$$ and
$$-q(\hat n-qi) \equiv \hat n+q^2 i \equiv q^2 \hat n +q^2i=Q(\hat n+i) \pmod n.$$
It then follows that $S=-qS$. By Theorem \ref{thm-DS}, we conclude that $\mathcal C$ is a cyclic Hermitian LCD  code over $\gf(Q)$.
It should be noticed that $C$ may be not a BCH code in general.

\emph{A. Cyclotomic cosets modulo  $Q^m-1$}

To present the dimension of the code $\mathcal C$, we need some results on $Q$-cyclotomic cosets.

\begin{lemma} \label{LPCL} \cite{AKS, YF} Let $m \ge 2$ be an integer and denote $\bar m=\lceil \frac m 2 \rceil$. For any $i$ with $1 \le i \le Q^{\bar m}$ and $Q \nmid i$, $i$ is a coset leader of the cyclotomic coset $C_i$ and
$|C_i|=m$ for all $i$ in the range $1 \le i \le Q^{\bar m}$.
\end{lemma}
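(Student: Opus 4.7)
The plan is to prove the two assertions separately, both by leveraging the observation that modulo $n=Q^m-1$ multiplication by $Q$ acts as a cyclic left shift of the $m$-digit base-$Q$ expansion (since $Q^m\equiv 1\pmod n$). This converts each question into an elementary comparison of digit strings, and the cut-off $Q^{\bar m}$ is explained by the fact that $i$ has at most $\bar m$ non-zero base-$Q$ digits.

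For the cardinality $|C_i|=m$, I would argue by contradiction. If $|C_i|=\ell$ with $\ell\mid m$ and $\ell<m$, then $i(Q^\ell-1)\equiv 0\pmod{Q^m-1}$, so $k:=(Q^m-1)/(Q^\ell-1)=1+Q^\ell+Q^{2\ell}+\cdots+Q^{m-\ell}$ divides $i$. Since every proper divisor of $m$ satisfies $\ell\le\lfloor m/2\rfloor$, one has $m-\ell\ge\lceil m/2\rceil=\bar m$, and therefore $k>Q^{m-\ell}\ge Q^{\bar m}\ge i\ge 1$, forcing $k\nmid i$, a contradiction. For the coset-leader assertion under $Q\nmid i$, write $i=\sum_{t=0}^{\bar m-1}i_tQ^t$ in base $Q$; the hypothesis $Q\nmid i$ gives $i_0\ne 0$ and, together with $Q\mid Q^{\bar m}$, also forces $i<Q^{\bar m}$. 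For $0\le j\le m-1$ the element $iQ^j\bmod n$ has base-$Q$ expansion $\sum_t i_tQ^{(t+j)\bmod m}$, because the positions $(t+j)\bmod m$ are distinct and each $i_t\le Q-1$, so no carries occur and the sum is automatically less than $Q^m-1$. If $1\le j\le m-\bar m$ no position wraps around, and $iQ^j=Q^ji>i$; if $m-\bar m+1\le j\le m-1$ then $j\ge\bar m$ (using $\bar m\le(m+1)/2$), and the contribution of $i_0$ alone already yields $iQ^j\bmod n\ge i_0Q^j\ge Q^{\bar m}>i$. In either case the shift strictly exceeds $i$, so $i$ is the smallest element of $C_i$.

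The main obstacle I anticipate is the bookkeeping in the wrap-around regime of the coset-leader step: one must confirm that $\bar m\le(m+1)/2$ holds in both parities of $m$ and is sharp enough to push $j$ into the range $j\ge\bar m$, and that the single term $i_0Q^j$ already dominates $i$ without help from the other digits. Once the shift-as-multiplication dictionary is firmly in place, the remainder of the argument is routine digit comparison.
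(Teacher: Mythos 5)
Your proof is correct. Note, however, that the paper offers no proof of this lemma at all: it is imported verbatim from the cited references (it is the specialization of Lemma~\ref{lem-AKS} to $n=Q^m-1$, where the bound $nQ^{\lceil m/2\rceil}/(Q^m-1)$ becomes $Q^{\bar m}$). So there is nothing in the paper to compare against; what you have written is a self-contained replacement for the citation. Both halves of your argument check out: for $|C_i|=m$, the divisor $k=(Q^m-1)/(Q^{\ell}-1)$ of $i$ indeed exceeds $Q^{m-\ell}\ge Q^{\bar m}\ge i$ because a proper divisor $\ell$ of $m$ satisfies $\ell\le\lfloor m/2\rfloor$; and for the coset-leader claim, the key facts — that $Q\nmid i$ forces $i<Q^{\bar m}$ so $i$ occupies only $\bar m<m$ digit positions, that the cyclic shift $\sum_t i_tQ^{(t+j)\bmod m}$ is already the reduced residue since at least one digit is zero, and that in the wrap-around regime $j\ge m-\bar m+1\ge\bar m$ the single term $i_0Q^j\ge Q^{\bar m}>i$ suffices — are all verified correctly, and the two ranges of $j$ cover all of $1\le j\le m-1$. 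This is essentially the standard digit-cycling argument underlying the Aly--Klappenecker--Sarvepalli and Yue--Feng results, carried out in the special case actually needed here.
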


The following lemma on $Q$-cyclotomic cosets modulo $Q^m-1$ will be employed later.

\begin{lemma} \label{LCC} Let $n=Q^m-1$ and $\hat n=\frac n e$. Then we have the following.
\begin{enumerate}
  \item $|C_{\hat n +i}|=|C_{\hat n-qi}|=|C_i|$.
  \item $C_{\hat n +i}=C_{\hat n +j}$ if and only if $C_i=C_j$.
 \item $C_{\hat n -qi}=C_{\hat n-qj}$ if and only if $C_i=C_j$.
 \item $C_{\hat n+Qi}=C_{\hat n+i}$ and $C_{\hat n -qQi}=C_{\hat n -qi}$.
\end{enumerate}
\end{lemma}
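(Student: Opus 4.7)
The plan hinges on one elementary identity: since $e \mid (q+1)$ divides $Q-1$, the element $\alpha^{\hat n}$ lies in $\gf(Q)$, so $(\alpha^{\hat n})^Q = \alpha^{\hat n}$, which translates into
\[
Q\hat n \equiv \hat n \pmod{n}.
\]
Equivalently $(Q-1)\hat n \equiv 0 \pmod n$. I would open the proof by recording this identity, because every subsequent calculation reduces to it.

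For part (4), I would simply compute $\hat n + Qi \equiv Q\hat n + Qi \equiv Q(\hat n + i) \pmod n$ and $\hat n - qQi \equiv Q\hat n - qQi \equiv Q(\hat n - qi) \pmod n$, which places the shifted representatives in the same cyclotomic coset as the unshifted ones. For part (1), the same identity lets me iterate: $(\hat n + i)Q^\ell \equiv \hat n + iQ^\ell \pmod n$, so the smallest $\ell$ with $(\hat n + i)Q^\ell \equiv \hat n + i \pmod n$ coincides with the smallest $\ell$ with $iQ^\ell \equiv i \pmod n$, giving $|C_{\hat n+i}| = |C_i|$. The same argument applied to $\hat n - qi$ yields $(\hat n - qi)Q^\ell \equiv \hat n - qiQ^\ell \pmod n$; since $\gcd(q,n) = \gcd(q, q^{2m}-1) = 1$, this equals $\hat n - qi$ iff $iQ^\ell \equiv i \pmod n$, hence $|C_{\hat n - qi}| = |C_i|$.

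Parts (2) and (3) are handled by the same trick in both directions. If $C_i = C_j$, then $j \equiv iQ^k \pmod n$ for some $k$, so $\hat n + j \equiv \hat n + iQ^k \equiv (\hat n + i)Q^k \pmod n$ using $Q\hat n \equiv \hat n$, placing $\hat n + j$ in $C_{\hat n+i}$. Conversely, equality of the shifted cosets gives $\hat n + j \equiv (\hat n + i)Q^k \equiv \hat n + iQ^k \pmod n$, which strips off $\hat n$ to yield $j \equiv iQ^k \pmod n$ and hence $C_i = C_j$. Part (3) is identical with $-qi, -qj$ in place of $i,j$, again invoking $\gcd(q,n)=1$ to cancel the factor of $q$ at the end.

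The only real obstacle is the bookkeeping around invertibility: I must flag that $q$ is a unit modulo $n = Q^m-1$ so that $qiQ^\ell \equiv qi \pmod n$ can be canceled to $iQ^\ell \equiv i \pmod n$, and I should be careful that $Q\hat n \equiv \hat n \pmod n$ (rather than merely modulo a divisor of $n$) is exactly the statement $\alpha^{\hat n} \in \gf(Q)$ already noted in the paper. Everything else is direct substitution, so the write-up should be short.
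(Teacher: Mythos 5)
Your proposal is correct and follows essentially the same route as the paper: the paper's proof also rests on the single observation that $e \mid Q^{\ell}-1$ forces $Q^{\ell}\hat n \equiv \hat n \pmod n$, so that $(\hat n + i)Q^{\ell} \equiv \hat n + j \pmod n$ is equivalent to $iQ^{\ell} \equiv j \pmod n$. Your derivation of this identity from $\alpha^{\hat n}\in\gf(Q)$ and iteration of $Q\hat n\equiv\hat n$, together with the explicit use of $\gcd(q,n)=1$ for the $\hat n - qi$ cases, is just a slightly more detailed write-up of the same argument.
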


\begin{proof}
Let $\ell$ be a positive integer. Note that $e \mid Q^\ell-1$. It is clear that
$$(\hat n+i)Q^\ell \equiv \hat n+j \pmod n$$ is equivalent to
$$Q^\ell i \equiv j \pmod n.$$ The desired conclusions then follow.
\end{proof}

For $n=Q^m-1$, denote
$$J^+(\delta)=\bigcup_{i=1}^{\delta-1}C_{\hat n+i} \text{ and } J^-(\delta)=\bigcup_{i=1}^{\delta-1}C_{\hat n-qi}.$$
The following proposition will play an important role in presenting the dimension of the code $\mathcal C$ when $\delta \le Q^{\bar m}+1$.

\begin{proposition} \label{PPrimitiveCase} Let $m \ge 2$, $\bar m=\lceil \frac m 2 \rceil$ and $2 \le \delta \le Q^{\bar m}+1$. Then the following hold.
\begin{enumerate}
  \item $|J^+(\delta)|=|J^-(\delta)|=(\delta-1-\lfloor \frac {\delta-1} Q \rfloor)m.$
  \item When $m$ is odd, for $1 \le u \le q-1$, we have
  $$|J^+(\delta)\bigcap J^-(\delta)|=\begin{cases}
  0, & \text{ if } 2 \le \delta \le q^m-1; \\
  u^2m, & \text{ if } uq^m \le \delta \le (u+1)(q^m-1); \\
 (u^2+2v+1)m, & \text{ if } \delta=(u+1)(q^m-1)+v+1 \text{ for } 0 \le v \le u-1; \\
  q^2 m, & \text{ if } \delta=q^{m+1} \text{ or } q^{m+1}+1.
  \end{cases}$$
  \item  When $m$ is even, for any $\delta$ with $2 \le \delta \le q^m+1$, we have
  $$|J^+(\delta)\bigcap J^-(\delta)|=0.$$
\end{enumerate}
\end{proposition}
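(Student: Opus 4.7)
The strategy is to rephrase everything about $J^{\pm}(\delta)$ in terms of $Q$-cyclotomic cosets modulo $n=q^{2m}-1$ and to analyse those cosets via their base-$q$ digit expansions. For Part (1), Lemma \ref{LCC}(4) gives $C_{\hat n+i}=C_{\hat n+Qi}$, so the distinct cosets among $\{C_{\hat n+i}:1\le i\le \delta-1\}$ are indexed by $\{i:1\le i\le\delta-1,\ Q\nmid i\}$, a set of size $\delta-1-\lfloor(\delta-1)/Q\rfloor$. Since $\delta-1\le Q^{\bar m}$, Lemma \ref{LPCL} shows each such $i$ is a distinct coset leader with $|C_i|=m$, and Lemma \ref{LCC}(1) transfers this to $|C_{\hat n+i}|=m$; a symmetric argument via Lemma \ref{LCC}(3) handles $J^-(\delta)$, yielding the claimed cardinalities.

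For Parts (2) and (3), Lemma \ref{LCC} likewise reduces $C_{\hat n+i}=C_{\hat n-qj}$ to $C_i=C_{-qj}$ as $Q$-cyclotomic cosets modulo $n$, so the intersection size is $m$ times the number of common cosets that arise. The key tool is the base-$q$ representation: if $i=\sum_{s=0}^{2m-1}i_s q^s$, then $C_i$ is exactly the set of cyclic shifts of $(i_0,\ldots,i_{2m-1})$ by an \emph{even} number of positions, since multiplication by $Q=q^2$ acts as a 2-step cyclic shift modulo $q^{2m}-1$. A short computation from $-j\equiv\sum_s(q-1-j_s)q^s\pmod{q^{2m}-1}$ produces the base-$q$ tuple of $-qj$ as $(q-1-j_{2m-1},\,q-1-j_0,\,\ldots,\,q-1-j_{2m-2})$. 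Since cyclic shifts preserve digit sums, $C_i=C_{-qj}$ forces the necessary condition $S(i)+S(j)=2m(q-1)$, where $S$ denotes base-$q$ digit sum.

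In Part (3) ($m$ even), $i,j\le Q^{\bar m}=q^m$ gives $S(i),S(j)\le m(q-1)$, so the equation $S(i)+S(j)=2m(q-1)$ forces $i=j=q^m-1$; a direct inspection shows the two tuples are related by a cyclic shift of exactly $m+1$ positions, which is odd when $m$ is even, so no even shift identifies them and the intersection is empty. In Part (2) ($m$ odd), $i,j\le Q^{\bar m}=q^{m+1}$; writing $i=uq^m+i'$ and $j=vq^m+j'$ with $0\le u,v\le q$ and $0\le i',j'<q^m$, the digit-sum constraint admits many more solutions. I would verify the formula case by case: for $\delta\le q^m-1$, the bound $S(i)+S(j)\le 2(m(q-1)-1)<2m(q-1)$ excludes all pairs; for higher $\delta$, I would count common cosets by determining which base-$q$ ``necklaces'' arise simultaneously as shifts of some $i$'s tuple and some $-qj$'s tuple, and by tracking how many new common cosets enter as $\delta$ crosses each of the thresholds $uq^m$, $(u+1)(q^m-1)+v+1$, and $q^{m+1}$. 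The principal obstacle is executing this enumeration cleanly and verifying the closed-form values $u^2m$, $(u^2+2v+1)m$, and $q^2m$ without double-counting across overlapping ranges.
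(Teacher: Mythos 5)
Your Part (1) matches the paper's argument, and your Part (3) is a nice self-contained treatment: the digit-sum invariant $S(i)+S(j)=2m(q-1)$ pins down $i=j=q^m-1$, and the parity obstruction (the unique shift relating the two necklaces is $m+1$, odd when $m$ is even) correctly kills the intersection. This is arguably more transparent than the paper, which simply omits the even case as "similar."

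However, Part (2) --- the substantive content of the proposition --- is left as a plan, and the gap is real. Your digit-sum condition is only necessary; for $m$ odd the range is $i,j\le q^{m+1}$, so the strings have $m+1$ nonzero-capable digit positions and $S(i)+S(j)=2m(q-1)$ admits a large family of candidate pairs. The whole difficulty is (a) determining which of these candidates actually satisfy $C_i=C_{-qj}$ via an \emph{even} shift, (b) showing the surviving pairs are exactly $i=(u+1)q^m-v-1$, $j=(v+1)q^m-u-1$ for $0\le u,v\le q-1$, (c) verifying that the resulting cosets $C_{\hat n+i_{uv}}$ are pairwise distinct (no double-counting), and (d) counting, as a function of $\delta$, how many pairs $(u,v)$ satisfy $i_{uv}\le\delta-1$ and $j_{uv}\le\delta-1$ --- this last step is where the values $u^2$, $u^2+2v+1$, $q^2$ come from, and it requires tracking the interleaved thresholds $uq^m$ and $(u+1)(q^m-1)+v+1$. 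The paper does (a)--(b) by a case analysis on the shift amount $\ell$, showing only $\ell=\frac{m-1}{2}$ can produce a carry pattern summing to $n$, handles (c) via Lemma \ref{LPCL} (each $i_{uv}$ is a coset leader), and does (d) with the two $q\times q$ matrices $A$ and $B=A^{T}$. None of this is in your writeup, and you acknowledge as much; as it stands the proposal establishes Parts (1) and (3) and only the first line ($\delta\le q^m-1$) of Part (2).
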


\begin{proof} It is easily deduced from Lemma \ref{LCC} that
$C_{\hat n +i} \ne C_{\hat n +j}$ if and only if $C_i \ne C_j$.  Note that every integer $i$ with $1 \le i \le Q^{\bar m}$ and $Q \nmid i$ is a coset leader of the cyclotomic coset $C_i$ by Lemma \ref{LPCL}. Then
$$|J^+(\delta)|=(\delta-1-\lfloor \frac {\delta-1} Q \rfloor)m.$$
Similarly, we have
$$|J^-(\delta)|=(\delta-1-\lfloor \frac {\delta-1} Q \rfloor)m.$$

For the case that $m$ is odd, we have $\bar m=\frac {m+1} 2$.
Suppose that $a \in J^+(\delta) \cap J^-(\delta)$. Then there exist $i$ and $-qj$ with $1 \le i, j \le \delta-1$ such that
$$a \in C_{\hat n +i}=C_{\hat n-qj},$$
which implies that
\begin{equation}\label{EIJ1} i \equiv -qjQ^\ell \pmod {Q^m-1} \text{  and } i+jq^{2\ell+1} \equiv 0 \pmod {q^{2m}-1}\end{equation} for
some $1 \le \ell \le m-1$.

We can further assume that $Q \nmid i$ and $Q \nmid j$. Note that $i, j \le Q^{\bar m}=q^{m+1}$. Then we have the following two $q$-adic expansions
   $$i=i_{m}q^{m}+i_{m-1}q^{m-1}+\cdots+i_1q+i_0$$ and
$$j=j_{m}q^{m}+j_{m-1}q^{m-1}+\cdots+j_1q+j_0,$$
where $0 \le i_k, j_k \le q-1$ for all $k$ with $0 \le k \le m$, $(i_1, i_0) \ne (0,0)$, and $(j_1,j_0) \ne (0,0)$.

  \emph{Case 1:} When $1 \le \ell \le \frac {m-3} 2$, it is easy to check that $0 < i+jq^{2\ell+1} < q^{2m}-1$. It then follows that $i+jq^{2\ell+1} \equiv 0 \pmod {q^{2m}-1}$ doesn't hold.

   \emph{Case 2:} When $\ell=\frac {m-1} 2$, it can be verified that $i+jq^{2\ell+1} \equiv \Delta \pmod n$, where
$$\Delta=j_{m-1}q^{2m-1}+\cdots+j_1q^{m+1}+(j_0+i_{m})q^{m}+i_{m-1}q^{m-1}+\cdots+i_1q+(i_0+j_{m}).$$
Notice that $0< \Delta < 2n$. It then follows from \eqref{EIJ1} that $\Delta=n$. Thus
$$j_{m-1}=\cdots=j_1=j_0+i_{m}=i_{m-1}=\cdots=i_1=i_0+j_{m}=q-1.$$
Then $$i_m=u, \ \ j_m=v, \ \ i_0=q-1-v, \ \ j_0=q-1-u,$$ where $u, v=0, 1, 2, \ldots, q-1$.
 Hence $$i=(u+1)q^m-v-1 \text{ and } j=(v+1)q^m-u-1.$$

   \emph{Case 3:} When $\frac {m+1} 2 \le \ell \le m-1$, we have $m+2 \le 2\ell+1 \le 2m-1$. Denote $2\ell+1=m+\epsilon$, where $2 \le \epsilon \le m-1$. Then
  $i+jq^\ell \equiv \Delta \pmod n$, where
  \begin{eqnarray*}\Delta &=&j_{m-\epsilon-1}q^{2m-1}+\cdots+j_1q^{m+\epsilon+1}+j_0q^{m+\epsilon}+i_mq^m+\cdots+i_{\epsilon+1}q^{\epsilon+1}
  +(i_{\epsilon}+j_m)q^{\epsilon}\\&+&\cdots+(i_1+j_{m-\epsilon+1})q+(i_0+j_{m-\epsilon}). \end{eqnarray*}
   It is easy to see that the coefficient of $q^{m-1}$ in the $q$-adic expansion of $\Delta$ is equal to $0$.
 Thus we have $0<\Delta <n$, which means that \eqref{EIJ1} is impossible.

Denote $i=i_{uv}$ and $j=j_{uv}$ described in Case 3. Write
$$L=\{(u,v): i_{uv} \le \delta-1, j_{uv} \le \delta-1\}.$$ Note that Cases 1, 2, and 3 contain all possible pairs $(i, j)$ such that
$C_{\hat n +i}=C_{\hat n-qj}$ with $1 \le i, j \le \delta-1$, where $\delta \le q^{m+1}+1$. Thus
\begin{equation} \label{DU} J^+(\delta)\bigcap J^-(\delta)=\bigcup_{(u, v)\in L}C_{\hat n+i_{uv}}.\end{equation}
We have seen that each $i_{uv}$ is a coset leader from Lemma \ref{LPCL}. Also, by Lemma \ref{LCC},
$C_{\hat n +i} \ne C_{\hat n +j}$ if and only if $C_i \ne C_j$. Hence, the union in \eqref{DU} is disjoint. It then
follows that $|J^+(\delta)\bigcap J^-(\delta)|=|L|m$.

Now we are going to employ two auxiliary matrices to evaluate $|L|$.
Put $i_{uv}$ and $j_{uv}$ into two matrices $A$ and $B$, respectively, where
$$A=\big(i_{uv}\big)=\left(\begin{array}{ccccccc}
      q^m-1 & q^m-2 & q^m-3 & \cdots & q^m-u & \cdots & q^m-q \\
      2q^m-1 & 2q^m-2 & 2q^m-3 & \cdots & 2q^m-u & \cdots & 2q^m-q \\
      3q^m-1 & 3q^m-2 & 3q^m-3 & \cdots & 3q^m-u & \cdots & 3q^m-q \\
      \vdots & \vdots & \vdots & \ddots & \vdots & \vdots & \vdots \\
       uq^m-1 & uq^m-2 & uq^m-3 & \cdots & uq^m-u & \cdots & uq^m-q \\
       \vdots & \vdots & \vdots & \ddots & \vdots & \ddots & \vdots \\
      q^{m+1}-1 & q^{m+1}-2 & q^{m+1}-3 & \cdots & q^{m+1}-u & \cdots & q^{m+1}-q
    \end{array}\right)_{q \times q}$$  and
 $$B=\big(j_{uv}\big)=\left(\begin{array}{ccccccc}
      q^m-1 & 2q^m-1 & 3q^m-1 & \cdots & uq^m-1 & \cdots & q^{m+1}-1 \\
      q^m-2 & 2q^m-2 & 3q^m-2 & \cdots & uq^m-2 & \cdots & q^{m+1}-2 \\
      q^m-3 & 2q^m-3 & 3q^m-3 & \cdots & uq^m-3 & \cdots & q^{m+1}-3 \\
      \vdots & \vdots & \vdots & \ddots & \vdots & \vdots & \vdots \\
       q^m-u & 2q^m-u & 3q^m-u & \cdots & uq^m-u & \cdots & q^{m+1}-u \\
       \vdots & \vdots & \vdots & \ddots & \vdots & \ddots & \vdots \\
      q^{m}-q & 2q^{m}-q & 3q^{m}-q & \cdots & uq^{m}-q & \cdots & q^{m+1}-q
    \end{array}\right)_{q \times q}.$$   In fact, $B$ is the transpose of $A$. With the help of the patterns of $A$ and $B$, it is not hard to see  that
    $$|L|=\begin{cases}
  0, & \text{ if } 2 \le \delta \le q^m-1; \\
  u^2, & \text{ if } uq^m \le \delta \le (u+1)(q^m-1)\text{ for } 1 \le u \le q-1; \\
  u^2+2v+1, & \text{ if } \delta=(u+1)(q^m-1)+v+1 \text{ for } 0 \le v \le u-1; \\
  q^2, & \text{ if } \delta=q^{m+1} \text{ or } q^{m+1}+1.
  \end{cases}$$
    For example, if $\delta-1=u(q^m-1)$, i.e., $\delta=u(q^m-1)+1$, then one can see that
    $$L=\{(s, t): 1 \le s, t \le u-1\} \cup \{(u, u)\}$$ and $|L|=(u-1)^2+1$.
Hence we get the desired conclusion when $m$ is odd.
The case that $m$ is even can be similarly dealt with and the proof is omitted here.
\end{proof}

\emph{B. Parameters of the Hermitian LCD code $\mathcal C$}

\begin{theorem} \label{T1} Let $m \ge 2$, $\bar m=\lceil \frac m 2 \rceil$, $Q=q^2$, and $2 \le \delta \le Q^{\bar m}+1$.
Let $\mathcal C$ be a cyclic code of length $n$ with generator
polynomial $g(x)$ given by \eqref{G1}. Then $\mathcal C$ is a cyclic Hermitian LCD  code over $\gf(Q)$ and the following hold.
\begin{enumerate}
  \item When $m$ is odd,  for $1 \le u \le q-1$, the code $\mathcal C$ has length
$n=Q^m-1$, dimension $k$, and minimum distance $d \ge \delta+1+\lfloor \frac {\delta-1} q \rfloor$, where
$$k=\begin{cases}
  Q^m-2-2(\delta-1-\lfloor \frac {\delta-1} Q\rfloor)m, & \text{ if } 2 \le \delta \le q^m-1; \\
  Q^m-2-2(\delta-1-\lfloor \frac {\delta-1} Q\rfloor)m+u^2m, & \text{ if } uq^m \le \delta \le (u+1)(q^m-1); \\
  Q^m-2-2(\delta-1-\lfloor \frac {\delta-1} Q\rfloor)m+(u^2+2v+1)m, & \text{ if } \delta=(u+1)(q^m-1)+v+1 \text{ for } 0 \le v \le u-1;  \\
  Q^m-2-2(\delta-1-\lfloor \frac {\delta-1} Q\rfloor)m+q^2m, & \text{ if } \delta=q^{m+1} \text{ or } q^{m+1}+1.
  \end{cases}$$
  \item When $m$ is even, the code $\mathcal C$ has length
$n=Q^m-1$, dimension $$k=Q^m-2-2(\delta-1-\lfloor \frac {\delta-1} Q\rfloor)m,$$ and minimum distance $d \ge \delta+1+\lfloor \frac {\delta-1} q \rfloor$.
\end{enumerate}
\end{theorem}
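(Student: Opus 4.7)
The plan is to prove the three claims in sequence: first that $\mathcal C$ is Hermitian LCD, then the dimension via a cardinality count of the defining set $S$, and finally the distance bound by exhibiting a long run of consecutive integers inside $S$ and invoking the BCH bound. The Hermitian LCD assertion is already handled in the paragraph immediately preceding the theorem: the identities $-q(\hat n+i)\equiv \hat n-qi$ and $-q(\hat n-qi)\equiv Q(\hat n+i)\pmod n$ give $-qS=S$, which by Theorem \ref{thm-DS} is equivalent to $\mathcal C$ being Hermitian LCD, so nothing further is needed on this point.

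For the dimension, I would write
$$|S|=|C_{\hat n}|+|J^+(\delta)|+|J^-(\delta)|-|J^+(\delta)\cap J^-(\delta)|,$$
after checking that $C_{\hat n}$ is disjoint from $J^\pm(\delta)$. Since $e\mid q+1\mid Q-1$, we have $\alpha^{\hat n}\in\gf(Q)$, so $|C_{\hat n}|=1$; disjointness of $C_{\hat n}$ from $J^\pm(\delta)$ follows from $\gcd(Q,n)=1$, as otherwise $Q^j i\equiv 0\pmod n$ would force some $1\le i\le \delta-1$ to vanish. The remaining cardinalities $|J^\pm(\delta)|$ and $|J^+(\delta)\cap J^-(\delta)|$ are then read off from Proposition \ref{PPrimitiveCase} in each of the sub-cases (odd versus even $m$, and for odd $m$ the four regimes $2\le\delta\le q^m-1$, $uq^m\le\delta\le(u+1)(q^m-1)$, $\delta=(u+1)(q^m-1)+v+1$, and $\delta\in\{q^{m+1},q^{m+1}+1\}$), and substitution into $k=(Q^m-1)-|S|$ delivers each claimed value of $k$.

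For the minimum distance, I would produce $\delta+\lfloor(\delta-1)/q\rfloor$ consecutive integers inside $S$. The elements $\hat n,\hat n+1,\ldots,\hat n+\delta-1$ lie in $S$ since $\hat n\in C_{\hat n}$ and $\hat n+i\in C_{\hat n+i}\subseteq J^+(\delta)$. The extra ingredient is the identity
$$Q^{m-1}(\hat n-qi)\equiv \hat n - q^{2m-1}i \equiv \hat n - q^{-1}i \pmod n,$$
which combines $Q\hat n\equiv \hat n$ with $q^{2m}=Q^m\equiv 1\pmod n$. Taking $i=sq$ with $1\le s\le \lfloor(\delta-1)/q\rfloor$ ensures $1\le i\le \delta-1$ and places $\hat n-s$ inside $C_{\hat n-qi}\subseteq J^-(\delta)$. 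Consequently the $\delta+\lfloor(\delta-1)/q\rfloor$ consecutive integers
$$\hat n-\lfloor(\delta-1)/q\rfloor,\ \hat n-\lfloor(\delta-1)/q\rfloor+1,\ \ldots,\ \hat n+\delta-1$$
all belong to $S$, and the BCH bound (Lemma \ref{LBCH}) yields $d\ge \delta+1+\lfloor(\delta-1)/q\rfloor$. I expect the main obstacle to be the dimension bookkeeping: matching the four-case formula of Proposition \ref{PPrimitiveCase} for $|J^+(\delta)\cap J^-(\delta)|$ against the four stated values of $k$ requires careful tracking as $\delta$ crosses the thresholds $uq^m$, $(u+1)(q^m-1)$, and $q^{m+1}$, whereas the distance bound reduces to the single identity $qQ^{m-1}\equiv q^{-1}\pmod n$ and is comparatively clean.
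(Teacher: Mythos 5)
Your proposal is correct and follows essentially the same route as the paper: the Hermitian LCD property is inherited from the discussion preceding the theorem, the dimension is obtained by counting $|S|=1+|J^+(\delta)|+|J^-(\delta)|-|J^+(\delta)\cap J^-(\delta)|$ via Proposition \ref{PPrimitiveCase}, and the distance bound comes from exhibiting the run $\{\hat n-\lfloor\frac{\delta-1}{q}\rfloor,\ldots,\hat n+\delta-1\}$ in the defining set and applying the BCH bound. Your identity $Q^{m-1}(\hat n-qi)\equiv \hat n-q^{-1}i\pmod n$ is just a more explicit justification of the paper's observation that $C_{\hat n-qi}=C_{\hat n-i/q}$ when $q\mid i$, so there is no substantive difference.
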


\begin{proof}
By definition, the degree of the generator polynomial $g(x)$ of $\mathcal C$ is equal to $$1+|J^+(\delta)|+|J^-(\delta)|-|J^+(\delta)\bigcap J^-(\delta)|.$$
The dimension $k=Q^m-1-\deg(g(x))$ then follows from Proposition \ref{PPrimitiveCase}. Note that $C_{\hat n-qi}=C_{\hat n-i/q}$ if $q \mid i$. Then the set
$$\{\hat n-\lfloor \frac {\delta-1} q \rfloor,\hat n -\lfloor \frac {\delta-1} q \rfloor+1, \ldots, \hat n-1, \hat n, \hat n+1, \hat n+2,\ldots, \hat n+\delta-1\}$$ is
contained in the defining set of $\mathcal C$. The desired conclusion on minimum distance then follows from the BCH bound.
\end{proof}

\section{Quaternary cyclic Hermitian LCD  codes of length $\frac {4^m-1} 3$}

In this section, we always assume that $q=2$, $Q=4$, and $n=\frac {4^m-1} 3$.
For any integer $\delta$ with $2 \leq \delta \leq n$, denote
\begin{equation} \label{G2} g(x)=(x-1)g_{(n, \delta,1)}(x)\lcm\big(m_{n-2(\delta-1)}(x),m_{n-2(\delta-2)}(x),\ldots, m_{n-2}(x)\big) \in \gf(4)(x),\end{equation}
where $g_{(n, \delta,1)}(x)$ is defined by \eqref{GPoly}. Let $\mathcal C$ be a cyclic code of length $n$ with generator
polynomial $g(x)$. It is clear that the defining set of $\mathcal C$ is
$$S=\bigcup_{i=1}^{\delta-1}(C_{i} \cup C_{n-2i}) \bigcup C_0.$$
Note that $S=-2S$. By Theorem \ref{thm-DS}, it can be concluded that $\mathcal C$ is a cyclic Hermitian LCD  code over $\gf(4)$.
In this section, we investigate the parameters of the quaternary cyclic Hermitian LCD  code $\mathcal C$ when $\delta \le 2^m$.
We first give some preliminaries on $4$-cyclotomic cosets modulo $\frac {4^m-1} 3$.

\emph{A. Cyclotomic cosets modulo  $\frac {4^m-1} 3$}

When $m$ is even, the following lemma characterizes all coset leaders in the range $1 \le i \le 2^m$.

\begin{lemma} \label{LME}
Let $m \ge 2$ be an even integer. For any $i$ with $1 \le i \le 2^m$ and $4 \nmid i$, then
\begin{enumerate}
  \item $|C_i|=m$;
  \item $i$ is a coset leader of the $4$-cyclotomic coset $C_i$  with $|C_i|=m$ except that $i=\frac {2^{m+1}+1} 3$.
\end{enumerate}
\end{lemma}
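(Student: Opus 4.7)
My plan hinges on the identification of $n = (Q^m - 1)/3$ with $3n = 4^m - 1$: multiplication by $Q = 4$ modulo $n$ lifts, via $a \mapsto 3a$, to multiplication by $4$ modulo $3n$, which is cyclic left rotation of the $m$-digit base-$4$ representation. Lemma~\ref{lem-AKS} already handles $1 \le i \le 2^m/3$ (for these $i$, $|C_i| = m$ and, when $4 \nmid i$, $i$ is a coset leader), so the only substantive work lies in the range $2^m/3 < i \le 2^m$.

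For part (1) in this range, suppose $|C_i| = \ell < m$ with $\ell \mid m$. Then $n \mid i(4^\ell - 1)$, so $N_\ell := (4^m-1)/(4^\ell - 1)$ divides $3i$. The largest proper divisor of $m$ is $m/2$ (as $m$ is even), giving $N_{m/2} = 2^m + 1$; since $2^m \equiv 1 \pmod 3$, $\gcd(N_{m/2}, 3) = 1$, so $2^m + 1 \mid i$---impossible for $i \le 2^m$. Any other proper divisor satisfies $\ell \le m/3$, and the bound $N_\ell \ge 4^{m-\ell} \ge 4^{2m/3}$ yields $N_\ell/3 > 2^m$ for $m \ge 6$; the small cases $m = 2, 4$ are verified directly ($m = 4$, $\ell = 1$: $N_1 = 85$ is coprime to $3$, so $85 \mid i$, impossible for $i \le 16$; $m = 2$: no such $\ell$ exists).

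For part (2), write $3i = \sum_{j=0}^{m-1} d_j 4^j$ with $d_j \in \{0,1,2,3\}$. The conditions $2^m/3 < i \le 2^m$ and $4 \nmid i$ translate into $d_j = 0$ for $j > m/2$, $d_{m/2} \ne 0$, and $d_0 \ne 0$. Under the correspondence above, $i$ is a coset leader iff $3i$ is the minimum of the $m$ cyclic rotations of the digit string $(d_{m-1}, \ldots, d_0)$. The top zero run of $3i$ has length exactly $m/2 - 1$; any competing rotation with as many leading zeros requires another cyclic zero run of length $m/2 - 1$, and because both $d_0$ and $d_{m/2}$ are nonzero (blocking wraparound and downward extension), this is possible only when $d_1 = \cdots = d_{m/2-1} = 0$. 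In that subcase $3i = d_{m/2} \cdot 2^m + d_0$; the divisibility $3 \mid d_{m/2} + d_0$ together with $d_{m/2}, d_0 \in \{1,2,3\}$ and $i \le 2^m$ leaves only the pairs $(d_{m/2}, d_0) \in \{(1,2), (2,1)\}$. Comparing $3i$ to the rotation $d_0 \cdot 2^m + d_{m/2}$, whose difference is $(d_{m/2} - d_0)(2^m - 1)$, shows $i$ fails to be a coset leader precisely when $d_0 < d_{m/2}$, i.e.\ $(d_{m/2}, d_0) = (2,1)$, yielding the exceptional $i = (2^{m+1}+1)/3$.

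The main obstacle is the case analysis in part (2): one must verify that no cyclic rotation with fewer leading zeros, nor one aligned with a partial interior zero run, can undercut $3i$. The pivotal observation is that $4 \nmid i$ (equivalently $d_0 \ne 0$) prevents the top zero run from wrapping around, thereby pinning the only possible competing run to the central positions $d_1, \ldots, d_{m/2-1}$ and reducing the analysis to the finite configuration above.
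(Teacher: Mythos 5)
Your proof is correct, but it takes a genuinely different route from the paper's. The paper expands $i$ itself in base $4$ as a $\bar m$-digit number ($\bar m=m/2$), and for each shift $\ell$ explicitly reduces $i4^{\ell}$ modulo $n$ using $4^{m-1}\equiv -(4^{m-2}+\cdots+4+1)\pmod n$, then compares $\overline{i4^{\ell}}$ with $i$ through a signed-digit case analysis whose critical and most delicate case is $\ell=\bar m$. You instead lift through $a\mapsto 3a$ to $\Z_{4^m-1}$, where multiplication by $4$ is an honest cyclic rotation of the $m$-digit base-$4$ string of $3i$; coset leadership becomes ``$3i$ is the minimal rotation,'' and the whole question collapses to locating cyclic zero-runs of length $m/2-1$. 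Since $d_0\neq 0$ and $d_{m/2}\neq 0$ split $\Z_m$ into two arcs of length exactly $m/2-1$, the only possible competitor is the rotation by $m/2$, and the arithmetic $3\mid d_{m/2}+d_0$ with $3i\le 3\cdot 2^m$ isolates $(d_{m/2},d_0)\in\{(1,2),(2,1)\}$ and hence the single exception $i=\frac{2^{m+1}+1}{3}$. This buys two things over the paper's treatment: the exceptional value emerges transparently rather than at the end of a long elimination, and you actually supply an argument for part (1) (via $N_\ell\mid 3i$ for proper divisors $\ell$ of $m$), which the paper's proof asserts but does not really establish. The trade-off is that your argument leans on the order-preserving correspondence $i\mapsto 3i$ and on Lemma~\ref{lem-AKS} for the range $i\le 2^m/3$, whereas the paper's computation is self-contained (if harder to audit). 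I checked the boundary details --- the translation of $2^m/3<i\le 2^m$ and $4\nmid i$ into the digit conditions, the exclusion of $(3,3)$ by $i\le 2^m$, and the degenerate case $m=2$ where the zero-run has length $0$ --- and they all hold.
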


\begin{proof}
Denote $\bar m=\frac m 2$. For $i$ in the range $1 \le i \le 4^{\bar m}$, we have the $4$-adic expansion
$$i=i_{\bar m-1}4^{\bar m-1}+i_{\bar m-2}4^{\bar m-2}+\cdots+i_1 4+i_0,$$
where $0 \le i_t \le 3$ for $1 \le t \le \bar m-1$ and $i_0 \ne 0$.

For an integer $\ell$ with $1 \le \ell \le m-1$, we have
$$i4^\ell=i_{\bar m-1}4^{\bar m+\ell-1}+i_{\bar m-2}4^{\bar m+\ell-2}+\cdots+i_1 4^{\ell+1}+i_04^\ell,$$
\begin{enumerate}
  \item When $\ell \le \bar m-1$, one can check that $i < i4^\ell < n$.
  \item When $\ell=\bar m$, $\bar m+\ell-1=m-1$. It then follows that
  \begin{eqnarray*}i4^\ell &\equiv& i_{\bar m-2} 4^{m-2}+\cdots+i_14^{\ell+1}+i_04^\ell-i_{\bar m-1}(4^{m-2}+\cdots +4+1) \\
  &\equiv& (i_{\bar m-2}-i_{\bar m-1})4^{m-2}+\cdots+(i_0-i_{\bar m-1})4^\ell-i_{\bar m-1}(4^{\ell-1}+\cdots+4+1) \pmod n.\end{eqnarray*}

 Let $\overline i$ denote the integer in the range $0 \le \overline i \le n-1$ with $i \equiv \overline i \pmod n$.
 If $i_{\bar m-2}-i_{\bar m-1}=\cdots=i_0-i_{\bar m-1}=0$, then $$i < \overline{i4^\ell}=n-i_{\bar m-1}(4^{\ell-1}+\cdots+4+1)<n.$$
 Otherwise, let $k$ be the largest integer such that $i_k-i_{\bar m-1} \ne 0$ with $0 \le k \le \bar m-2$. Then
  $i4^\ell \equiv \Delta \pmod n$, where
  $$\Delta=(i_k-i_{\bar m-1})4^{k+\ell}+\cdots+(i_0-i_{\bar m-1})4^\ell-i_{\bar m-1}(4^{\ell-1}+\cdots+4+1).$$
  If $i_k-i_{\bar m-1} < 0$, then $$i < \overline{i4^\ell}=n-\Delta < n.$$ If $i_k-i_{\bar m-1} > 0$ and $k \ge 1$, then $$i < \overline{i4^\ell}=\Delta < n.$$

  We then consider the case $i_0-i_{\bar m-1} > 0$, i.e., $k=0$. One can see that
  $$\overline{i4^\ell}-i=(i_0-i_{\bar m-1})4^{\bar m}-i_{\bar m-1}(4^{\bar m-1}+\cdots+4+1)-i.$$
  When $i_0-i_{\bar m-1} \ge 2$, it is easy to check that
  $$\overline{i4^\ell}-i >0.$$
  When $i_0-i_{\bar m-1}=1$, we have $i_0=i_{\bar m-1}+1$. Note that $i_{\bar m-1}=i_{\bar m-2}=\cdots=i_1$. Then
  $$\overline{i4^\ell}-i=4^{\bar m}-2i_{\bar m-1}\frac {4^{\bar m}-1} 3-1.$$
  It then can be verified that
  $\overline{i4^\ell}-i<0$ if and only if
  $$i_{\bar m-1}=i_{\bar m-2}=\cdots=i_1=2 \text{ and } i_0=3, \text{ i.e. }, i=\frac {2^{m+1}+1} 3.$$
  \item When $\ell \ge \bar m+1$, one can similarly check that $i \le \overline{i4^\ell} <n$ and we omit the details.
\end{enumerate}
Summarizing all the discussions above, we get the desired result.
\end{proof}

When $m$ is odd, all coset leaders in the range $1 \le i \le 2^m$ are characterized in the following lemma.

\begin{lemma} \label{LMO}
Let $m \ge 5$ be an odd integer. For any $i$ with $1 \le i \le 2^m$ and $4 \nmid i$, then
 \begin{enumerate}
   \item $|C_i|=m$;
   \item $i$ is a coset leader of the $4$-cyclotomic coset $C_i$  with $|C_i|=m$ except that $i=\frac {2^{m+1}+2} 3$ and $i=\frac {2^{m+1}+2^{m-1}+1} 3$.
 \end{enumerate} 
\end{lemma}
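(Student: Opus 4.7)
The plan is to follow the same $4$-adic expansion strategy as in Lemma \ref{LME}, now adapted to odd $m = 2\bar m + 1$ with $\bar m = (m-1)/2$. Any $i$ in $1 \le i \le 2^m = 2 \cdot 4^{\bar m}$ with $4 \nmid i$ admits the expansion
$$i = i_{\bar m} 4^{\bar m} + i_{\bar m - 1} 4^{\bar m - 1} + \cdots + i_1 \cdot 4 + i_0,$$
with $i_t \in \{0,1,2,3\}$ for $0 \le t \le \bar m - 1$, the top digit $i_{\bar m} \in \{0,1,2\}$, and $i_0 \ne 0$. Writing $\overline{i \cdot 4^\ell}$ for the residue of $i \cdot 4^\ell$ in $\{0,1,\ldots,n-1\}$, the goal is to show that $\overline{i \cdot 4^\ell} > i$ for every $1 \le \ell \le m-1$ save for the two stated exceptional values of $i$; this simultaneously yields that $i$ is a coset leader and that $|C_i| = m$, since the strict inequality rules out $\overline{i \cdot 4^\ell} = i$.

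I would split $\ell$ into four subranges according to how far $i \cdot 4^\ell$ overflows past position $m-1$. For $1 \le \ell \le \bar m - 1$ no reduction is needed and $i \cdot 4^\ell > i$ trivially. For $\ell = \bar m$ the highest nonzero digit sits at position $m-1 = 2\bar m$, still below $m$, and one uses the identity $4^{m-1} + \cdots + 4 + 1 = n$ to rewrite the result before comparing with $i$. The cases $\ell = \bar m + 1$ and $\bar m + 2 \le \ell \le m-1$ involve wrap-around past position $m-1$: using $4^m \equiv 1 \pmod n$, the digits that spill over get reinjected into the low end of the expansion. In each subrange, expanding $\overline{i \cdot 4^\ell}$ digit-by-digit and imposing $\overline{i \cdot 4^\ell} \le i$ forces a tight system of equalities among the digits $i_0, i_1, \ldots, i_{\bar m}$.

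Solving those digit systems, I expect to find exactly two patterns that produce a violation $\overline{i \cdot 4^\ell} < i$: the pattern $(i_0, i_1, \ldots, i_{\bar m-1}, i_{\bar m}) = (2,1,\ldots,1,1)$, which assembles into $i = 1 + (4^{\bar m+1}-1)/3 = (2^{m+1}+2)/3$, and the pattern $(3,2,\ldots,2,1)$, which assembles into $i = (5 \cdot 4^{\bar m}+1)/3 = (2^{m+1}+2^{m-1}+1)/3$. A direct check on each confirms the exception: for the first, multiplying by $4^{\bar m}$ and reducing mod $n$ drops into a strictly smaller residue, and analogously for the second with $4^{\bar m + 1}$. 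In every other case the digit analysis gives $\overline{i \cdot 4^\ell} > i$ strictly for all $1 \le \ell \le m-1$, so $i$ is a coset leader and $|C_i| = m$ follows at once.

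The main obstacle is the digit bookkeeping in the boundary cases $\ell \in \{\bar m, \bar m+1\}$: with $m$ odd there is no clean split of the $m$-digit base-$4$ window into equal halves, and the carry produced by the top digit $i_{\bar m}$ (which ranges over the enlarged set $\{0,1,2\}$ rather than $\{0,1,2,3\}$) creates an asymmetry absent in the even case. This extra freedom is precisely what permits a second exceptional pattern to emerge, so whereas Lemma \ref{LME} isolated a single exception, here the case analysis must carry two competing digit patterns through the reduction modulo $n$. Once the case split is set up, the remaining calculations are tedious but routine, mirroring Cases 1, 2, 3 of the proof of Lemma \ref{LME}.
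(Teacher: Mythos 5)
Your proposal is correct and follows essentially the same route as the paper's own (sketched) proof: a $4$-adic digit expansion of $i$, a split of $\ell$ into the same four ranges (your $\bar m=(m-1)/2$ is just the paper's $\bar m=(m+1)/2$ shifted by one), and the identification of the two exceptional digit patterns $(2,1,\dots,1)$ and $(3,2,\dots,2,1)$ failing at $\ell=\bar m$ and $\ell=\bar m+1$ respectively, which assemble to exactly the two stated values $\frac{2^{m+1}+2}{3}$ and $\frac{2^{m+1}+2^{m-1}+1}{3}$ (your second value actually matches the lemma statement, whereas the paper's proof sketch contains a small typo there). The only trivial imprecisions are that $4\nmid i$ forces the top digit into $\{0,1\}$ rather than $\{0,1,2\}$, and that for the two exceptional $i$ the claim $|C_i|=m$ needs the one-line remark that they lie in the coset of a smaller leader of full size.
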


\begin{proof} The proof is similar to that of Lemma \ref{LME} and we give a sketch here.
Denote $\bar m=\frac {m+1} 2$. For $i$ in the range $1 \le i \le 2\cdot4^{\bar m-1}-1$, we have the $4$-adic expansion
$$i=i_{\bar m-1}4^{\bar m-1}+i_{\bar m-2}4^{\bar m-2}+\cdots+i_1 4+i_0,$$
where $i_{\bar m-1}=0, 1$, $0 \le i_t \le 3$ for $1 \le t \le \bar m-2$, and $i_0 \ne 0$.

For an integer $\ell$ with $1 \le \ell \le m-1$, we have
$$i4^\ell=i_{\bar m-1}4^{\bar m+\ell-1}+i_{\bar m-2}4^{\bar m+\ell-2}+\cdots+i_1 4^{\ell+1}+i_04^\ell,$$
\begin{enumerate}
  \item When $\ell \le \bar m-2$, it is easy to check that $i < i4^\ell < n$.
  \item When $\ell=\bar m-1$, one can prove that
  $\overline{i4^\ell} < i$ if and only if
  $$i_{\bar m-1}=i_{\bar m-2}=\cdots=i_1=1 \text{ and } i_0=2, \text{ i.e., } i=\frac {2^{m+1}+2} 3.$$
  \item When $\ell=\bar m$, it can be checked that
  $\overline{i4^\ell} < i$ if and only if
  $$i_{\bar m-1}=1, \ \  i_{\bar m-2}=i_{\bar m-3}=\cdots=i_1=2, \  \text{ and } i_0=3, \text{ i.e., } i=\frac {2^{m+1}+2^{m-1}+2} 3.$$
  \item When $\ell \ge \bar m+1$, it can be shown that
  $i < \overline{i4^\ell} < n$.
  \end{enumerate}
 Summarizing all the discussions above, the desired conclusion then follows.
\end{proof}

For $n=\frac {4^m-1} 3$ and $\delta \ge 2$, denote
$$J^+(\delta)=\bigcup_{i=1}^{\delta-1}C_i \text{ and } J^-(\delta)=\bigcup_{i=1}^{\delta-1}C_{-2i}.$$

\begin{proposition} \label{PME} When $m \ge 2$ is even and $\delta \le 2^m$, we have
$$|J^+(\delta) \bigcap J^-(\delta)|=\begin{cases}
0, & \text{ if } 2 \le \delta \le \frac {2^{m+1}-2} 3; \\
2m, & \text{ if } \frac {2^{m+1}-2} 3+1 \le \delta \le \frac {2^{m+1}+2^{m-1}-1} 3; \\
4m, & \text{ if } \frac {2^{m+1}+2^{m-1}-1} 3+1 \le \delta \le 2^m. \end{cases}$$
\end{proposition}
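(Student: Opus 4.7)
\emph{Proof proposal.} The plan is to count the distinct cosets appearing in $J^+(\delta)\cap J^-(\delta)$; since all relevant cosets in the range $1\le i \le 2^m$ have cardinality $m$ by Lemma \ref{LME}, this count times $m$ gives the intersection size. A coset $C$ belongs to the intersection iff there exist $i,j \in [1,\delta-1]$ with $i \in C$ and $-2j \in C$, equivalently
$$i + 2j\cdot 4^\ell \equiv 0 \pmod n$$
for some $0\le\ell\le m-1$; writing $s=2\ell+1$, this becomes $i + j\cdot 2^s = kn$ with $k\ge 1$. A crucial symmetry follows from $4^m \equiv 1 \pmod n$: multiplying the congruence by $-2\cdot 4^{m-1-\ell}$ turns $(i,j,\ell)$ into $(j,i,m-1-\ell)$. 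Hence solutions with $\ell \in [m/2,\,m-1]$ mirror those with $\ell \in [0,\,m/2-1]$ under swapping $i$ and $j$, and I only need to analyze the latter half.

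For $\ell \le m/2-2$, the crude bound $i + j\cdot 2^s \le (2^m-1)(2^{m-3}+1) < (2^{2m}-1)/3 = n$ forbids any positive multiple of $n$, so no solutions exist. At the critical level $\ell = m/2-1$ (so $s=m-1$), a parallel estimate gives $i + j\cdot 2^{m-1} < 2n$ (valid for $m \ge 4$), forcing $k=1$. The equation $i + j\cdot 2^{m-1} = n$ with $i = n - j\cdot 2^{m-1}$ restricted to $[1,2^m-1]$ confines $j$ to the interval $[(n-2^m+1)/2^{m-1},\;(n-1)/2^{m-1}]$, whose length lies in $(1,2)$; a direct check shows this interval contains exactly the two integers $2A-1$ and $2A$, where $A=(2^m-1)/3$. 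Substituting yields the pairs $(i,j)=(A+2^{m-1},\,2A-1)$ and $(A,\,2A)$. By the symmetry above, the level $\ell=m/2$ contributes the swapped pairs $(2A-1,\,A+2^{m-1})$ and $(2A,\,A)$.

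It remains to verify that the four integers $A$, $2A$, $A+2^{m-1}$, $2A-1$ lie in $[1,2^m]$, are pairwise distinct (for $m\ge 4$), are not divisible by $4$, and differ from $(2^{m+1}+1)/3$; Lemma \ref{LME} then certifies that each is the coset leader of a size-$m$ coset, and these four cosets are distinct. Each pair $(i,j)$ contributes its coset to $J^+(\delta)\cap J^-(\delta)$ precisely when $\max(i,j)\le \delta-1$: the first two pairs have $\max = 2A = (2^{m+1}-2)/3$, so $C_A$ and $C_{2A}$ enter at $\delta = (2^{m+1}-2)/3 + 1$, while the remaining two pairs have $\max = A+2^{m-1} = (2^{m+1}+2^{m-1}-1)/3$, so $C_{A+2^{m-1}}$ and $C_{2A-1}$ enter at $\delta = (2^{m+1}+2^{m-1}-1)/3 + 1$. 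Tallying which pairs are active across the three $\delta$-ranges produces coset counts $0$, $2$, and $4$, and multiplying by $m$ gives the stated formula. The main obstacle is the tight case analysis at $\ell=m/2-1$: ruling out $k\ge 2$ via the inequality $(2^m-1)(2^{m-1}+1) < 2n$ and showing the interval for $j$ contains exactly the two claimed integers.
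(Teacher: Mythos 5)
Your argument is correct and shares its skeleton with the paper's proof: both reduce membership of a coset in $J^+(\delta)\cap J^-(\delta)$ to the congruence $i+j2^{2\ell+1}\equiv 0 \pmod n$ with $1\le i,j\le \delta-1$, locate all solutions at the two critical levels $\ell=\frac m2-1$ and $\ell=\frac m2$, and then count which cosets become active in each $\delta$-range; your pairs $(A,2A)$ and $(A+2^{m-1},2A-1)$ with $A=\frac{2^m-1}{3}$, together with their swaps, are exactly the paper's $(\frac{2^m-1}{3},\frac{2^{m+1}-2}{3})$ and $(\frac{2^{m+1}+2^{m-1}-1}{3},\frac{2^{m+1}-5}{3})$. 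Where you genuinely differ is in execution. First, the symmetry $(i,j,\ell)\mapsto(j,i,m-1-\ell)$, obtained by multiplying the congruence by $2\cdot 4^{m-1-\ell}$, disposes of every level $\ell\ge \frac m2$ in one stroke, whereas the paper treats its Cases 3 and 4 with fresh $2$-adic digit computations. Second, at the critical level you solve $i+j2^{m-1}=n$ by the crude bound forcing $k=1$ and by confining $j$ to an interval of length less than $2$, then checking which integers lie in it; the paper instead extracts the same two solutions from digit-by-digit identities. Your route is shorter and less error-prone, though the paper's digit method is the template it reuses for Proposition \ref{PMO} and Proposition \ref{PPrimitiveCase}. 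One further point in your favour: your restriction to $m\ge 4$ for the pairwise distinctness of $A$, $2A$, $A+2^{m-1}$, $2A-1$ is not merely a convenience --- at $m=2$ one has $A=2A-1=1$ and the claimed value $4m=8$ exceeds $n=5$, so the last case of the proposition genuinely requires $m\ge 4$, a point the paper's proof passes over. (A trivial wording slip: what you call ``the first two pairs'' with maximum $2A$ are really the two pairs built from $\{A,2A\}$, not the first two in the order you listed them; the thresholds and the final tally are nonetheless correct.)
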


\begin{proof}
Suppose that $$a \in J^+(\delta) \bigcap J^-(\delta).$$ Then $a \in C_i=C_{-2j}$ for $1 \le i, j \le 2^m-1$.
Thus there is an integer $\ell$ with $0 \le \ell \le m-1$ such that $i \equiv -2j4^\ell \pmod n$ and
\begin{equation} \label{IJE} i+j2^{2\ell+1} \equiv 0 \pmod n. \end{equation}

For $1 \le i, j \le 2^m-1$ and $4 \nmid i, j$, we have the following two $2$-adic expansions
$$i=i_{m-1}2^{m-1}+i_{m-2}2^{m-2}+\cdots+i_1 2+i_0$$
and
$$j=j_{m-1}2^{m-1}+j_{m-2}2^{m-2}+\cdots+j_1 2+j_0,$$
where $i_t,j_t=0 \text{ or } 1$, $(i_1, i_0) \ne (0,0)$, and $(j_1, j_0) \ne (0,0)$.

\emph{Case 1:} When $\ell \le \frac m 2-2$, we have $m+2\ell \le 2m-4$ and $$i+j2^{2\ell+1} <n,$$
which implies that \eqref{IJE} does not hold.

\emph{Case 2:} When $\ell=\frac m 2 -1$, it can be shown that
$$i+j2^{2\ell+1} \equiv j_{m-1}2^{2m-2}+\cdots+j_12^m+(j_0+i_{m-1})2^{m-1}+\cdots+i_1 2+i_0 \pmod n.$$
Recall that $n=2^{2m-2}+2^{2m-4}+\cdots+2^2+1$. Since $m-1$ is odd, we have
$$j_0+i_{m-1}=0 \text{ or } j_0+i_{m-1}=2.$$

If $j_0+i_{m-1}=0$, then $j_0=i_{m-1}=0$. Note that $(i_1, i_0) \ne(0, 0)$. It then follows that
$$i_1=i_3=\cdots=i_{m-3}=0, \ \ i_0=i_2=\cdots=i_{m-2}=1,$$
$$j_1=j_3=\cdots=j_{m-1}=1, \ \  j_2=j_4=\cdots=j_{m-2}=0.$$
Thus
$$i=\frac {2^m-1} 3 \text{ and } j=\frac {2^{m+1}-2} 3.$$

If $j_0+i_{m-1}=2$, then $j_0=i_{m-1}=1$ and
$$i+j2^{2\ell+1} \equiv j_{m-1}2^{2m-2}+\cdots+j_2 2^{m+1}+(j_1+1)2^m+i_{m-2}2^{m-2}+\cdots+i_1 2+i_0 \pmod n.$$
We similarly have
$$i_1=i_3=\cdots=i_{m-3}=1, \ \ i_0=i_2=\cdots=i_{m-2}=0,$$
$$j_1+1=j_3=\cdots=j_{m-1}=1, \ \  j_2=j_4=\cdots=j_{m-2}=0.$$
Then
$$i=\frac {2^{m+1}+2^{m-1}-1} 3 \text{ and } j=\frac {2^{m+1}-5} 3.$$

\emph{Case 3:} When $\ell=\frac m 2$, one can get $i+j2^{2\ell+1} \equiv \Delta \pmod n$, where
\begin{eqnarray*}\Delta &=& j_{m-3}2^{2m-2}+(j_{m-4}-j_{m-2})2^{2m-3}+j_{m-5}2^{2m-4}+(j_{m-6}-j_{m-2})2^{2m-5}+\\& &  \cdots +j_12^{m+2}+(j_0-j_{m-2})2^{m+1}+
(i_{m-1}-j_{m-2})2^{m-1}+i_{m-2}2^{m-2}+(i_{m-3}-j_{m-2})2^{m-3}\\&&+i_{m-4}2^{m-4}+ \cdots+i_2 2^2 +(i_1-j_{m-2})2 +(i_0+j_{m-1}).\end{eqnarray*}
Note that
$$(i_{m-1}-j_{m-2})2^{m-1}+i_{m-2}2^{m-2}+(i_{m-3}-j_{m-2})2^{m-3}+i_{m-4}2^{m-4}+ \cdots+i_2 2^2 +(i_1-j_{m-2})2 +(i_0+j_{m-1}) \le 2^m.$$
It then follows that $\Delta=0$. Thus $i_0+j_{m-1} \ne 1$, i.e., $i_0+j_{m-1}=0 \text{ or } 2$. It also can be see that
$$j_{m-3}2^{2m-2}+(j_{m-4}-j_{m-2})2^{2m-3}+j_{m-5}2^{2m-4}+(j_{m-6}-j_{m-2})2^{2m-5}+ \cdots +j_12^{m+2}+(j_0-j_{m-2})2^{m+1}=0$$ and
\begin{equation} \label{JE} j_{m-3}=j_{m-4}-j_{m-2}=j_{m-5}=j_{m-6}-j_{m-2}=\cdots=j_1=j_0-j_{m-2}=0.\end{equation}

If $i_0+j_{m-1}=0$, then $i_0=j_{m-1}=0$, so $i_1=1$.
By $\Delta=0$, we have
$$i_2=i_4=\cdots=i_{m-2}=0, \ \ i_1-j_{m-2}=i_3-j_{m-2}=\cdots=i_{m-1}-j_{m-2}=0.$$
One can see $j_0=1$ by $j_1=0$. It then follows that
$$i_1=i_3=\cdots=i_{m-1}=1 \text{ and } j_0=j_2=\cdots=j_{m-4}=j_{m-2}=1.$$
Thus
$$i=\frac {2^{m+1}-2} 3 \text{ and } j=\frac {2^m-1} 3.$$

If $i_0+j_{m-1}=2$, then $i_0=j_{m-1}=1$. We claim that $j_{m-2}=1$. Otherwise, $j_0=j_1=0$ by \eqref{JE}, which is a contradiction.
We then have $i_1-j_{m-2} \le 0$. It can be deduced that
$$i_1-j_{m-2}+1=0, \ \  i_{m-2}=i_{m-4}=\cdots=i_2=0,$$
$$i_{m-1}-j_{m-2}=i_{m-3}-j_{m-2}=\cdots=j_0-j_{m-2}=0.$$
Thus
$$i_1=0, \ \  i_3=i_5=\cdots=i_{m-1}=1, \ \text{ and } j_0=j_2=\cdots=j_{m-2}=1.$$
We conclude that
$$i=\frac {2^{m+1}-5} 3 \text{ and } j=\frac {2^{m+1}+2^{m-1}-1} 3.$$

\emph{Case 4:} When $\ell> \frac m 2$, let $\ell=\frac m 2+\epsilon$, where $1 \le \epsilon \le \frac m 2-1$.
One can verify that $i+j2^{2\ell+1} \equiv \Delta \pmod n$, where
\begin{eqnarray*}\Delta&=&j_{m-2\epsilon-3}2^{2m-2}+\cdots+j_1 2^{m+2\epsilon+2}+j_02^{m+2\epsilon+1}+i_{m-1}2^{m-1}+\\ && \cdots+i_{2\epsilon+2}2^{2\epsilon+2}+
(i_{2\epsilon+1}+j_{m-1})2^{2\epsilon+1}+\cdots+(i_0+j_{m-2\epsilon-2}) < n.\end{eqnarray*}
It then follows that \eqref{IJE} does not hold.

Observe that Cases 1, 2, 3, and 4 contain all possible pairs $(i, j)$ such that
$a \in C_i=C_{-2j}$ with $1 \le i, j \le 2^m-1$. Thus,
$$J^+(\delta) \bigcap J^-(\delta)=\begin{cases}
\emptyset, & \text{ if } 2 \le \delta \le \frac {2^{m+1}-2} 3; \\
C_{\frac {2^m-1} 3} \cup C_{\frac {2^{m+1}-2} 3}, & \text{ if } \frac {2^{m+1}-2} 3+1 \le \delta \le \frac {2^{m+1}+2^{m-1}-1} 3, \\
C_{\frac {2^m-1} 3} \cup C_{\frac {2^{m+1}-5} 3} \cup C_{\frac {2^{m+1}-2} 3} \cup C_{\frac {2^{m+1}+2^{m-1}-1} 3}, & \text{ if } \frac {2^{m+1}+2^{m-1}-1} 3+1 \le \delta \le 2^m, \end{cases}$$
where the unions are disjoint. The desired conclusion is then straightforward from Lemma \ref{LME}.
\end{proof}

\begin{proposition} \label{PMO} When $m \ge 5$ is odd and $\delta \le 2^m$, we have
$$|J^+(\delta) \bigcap J^-(\delta)|=\begin{cases}
0, & \text{ if } 2 \le \delta \le \frac {2^{m+1}-1} 3; \\
2m, & \text{ if } \frac {2^{m+1}-1} 3+1 \le \delta \le 2^m-1; \\
3m, & \text{ if } \delta=2^m. \end{cases}$$
\end{proposition}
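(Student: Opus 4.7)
The plan is to mirror Proposition~\ref{PME}, but with a symmetry observation that collapses most of the casework. An element $a\in J^+(\delta)\cap J^-(\delta)$ lies in a common coset $C_i=C_{-2j}$ with $1\le i,j\le\delta-1\le 2^m-1$ and, without loss, $4\nmid i$ and $4\nmid j$. Equivalently, there exists $\ell\in\{0,1,\ldots,m-1\}$ with
$$i+j\cdot 2^{2\ell+1}\equiv 0\pmod n.$$
I will enumerate every such pair and then impose the $\delta$-constraint.

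First I rule out small $\ell$: for $\ell\le(m-3)/2$, i.e.\ $2\ell+1\le m-2$, the estimate
$$i+j\cdot 2^{2\ell+1}\le(2^m-1)(1+2^{m-2})<n=(4^m-1)/3$$
(which reduces to $9\cdot 2^m<4^m+8$, valid for $m\ge 5$) forces the left side to vanish, impossible since $i\ge 1$. Next, multiplying the congruence by $2^{2m-(2\ell+1)}$ and using $2^{2m}\equiv 1\pmod n$ yields $j+i\cdot 2^{2\ell'+1}\equiv 0\pmod n$ with $\ell'=m-1-\ell$. When $\ell\ge(m+1)/2$, we have $\ell'\le(m-3)/2$, so this is a small-$\ell'$ instance with $i$ and $j$ interchanged, again ruled out. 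Since $m$ is odd, the unique self-conjugate index is $\ell=(m-1)/2$, which is the only surviving case.

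In this critical case $2\ell+1=m$, and $i+j\cdot 2^m$ is the $2m$-bit integer with $i$ in the low $m$ bits and $j$ in the high $m$ bits; it lies in $[0,2^{2m}-1]=[0,3n]$, forcing $i+j\cdot 2^m=kn$ for $k\in\{1,2,3\}$. Using the bit patterns $n=\sum_{t=0}^{m-1}2^{2t}$ (ones at even positions), $2n=\sum_{t=0}^{m-1}2^{2t+1}$ (ones at odd positions), and $3n=2^{2m}-1$ (all ones), together with the parity of $m$, I read off the pairs
$$\left(\frac{2^{m+1}-1}{3},\,\frac{2^m-2}{3}\right),\quad\left(\frac{2^m-2}{3},\,\frac{2^{m+1}-1}{3}\right),\quad(2^m-1,\,2^m-1)$$
for $k=1,2,3$ respectively. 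Each of the three $i$-values lies in $[1,2^m-1]$, satisfies $4\nmid i$, and differs from both exceptional values $(2^{m+1}+2)/3$ and $(2^{m+1}+2^{m-1}+1)/3$ of Lemma~\ref{LMO}, so each is a coset leader with $|C_i|=m$, and the three cosets are pairwise distinct.

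Finally, a pair $(i,j)$ contributes its coset to $J^+(\delta)\cap J^-(\delta)$ precisely when $\max(i,j)\le\delta-1$. The first two pairs share $\max(i,j)=(2^{m+1}-1)/3$ and jointly activate at $\delta\ge(2^{m+1}-1)/3+1$ (adding $2m$), while the self-dual pair activates only at $\delta=2^m$ (adding $m$), producing the stated trichotomy $0,\,2m,\,3m$. The main obstacle will be the bit-level bookkeeping in the critical case: because $m$ is odd, the low-$m$ and high-$m$ halves of $n$ (respectively $2n$) contain $(m+1)/2$ and $(m-1)/2$ ones, so $i$ and $j$ emerge as asymmetric partial geometric sums rather than as the symmetric values seen for even $m$. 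The self-dual contribution at $k=3$ has no analogue in Proposition~\ref{PME} and must be isolated by its own argument.
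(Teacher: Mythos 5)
Your proof is correct and arrives at exactly the same three coincidences $(i,j)=\bigl(\tfrac{2^{m+1}-1}{3},\tfrac{2^m-2}{3}\bigr)$, $\bigl(\tfrac{2^m-2}{3},\tfrac{2^{m+1}-1}{3}\bigr)$, $(2^m-1,2^m-1)$ as the paper, with the same overall skeleton: classify solutions of $i+j2^{2\ell+1}\equiv 0\pmod n$ by $\ell$, then invoke Lemma~\ref{LMO} for coset sizes and leaders. Where you genuinely depart from the paper is in the execution of the two hard cases. For $\ell\ge\frac{m+1}{2}$ the paper writes out the reduced $2$-adic expansion of $i+j2^{2\ell+1}$ and argues digit by digit that $0<\Delta<n$; you instead multiply the congruence by $2^{2m-(2\ell+1)}$ and use $2^{2m}\equiv 1\pmod n$ to map $(i,j,\ell)$ to $(j,i,m-1-\ell)$, reducing to the already-settled small-$\ell$ case. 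For the critical $\ell=\frac{m-1}{2}$ the paper again reduces modulo $n$ and chases digit constraints to force $\Delta\in\{0,n\}$; you observe that $i+j2^m$ is literally the concatenation of the bit strings of $j$ and $i$ and lies in $(0,3n]$, hence equals $n$, $2n$ or $3n$ exactly, and the pairs drop out of the known binary expansions of these three multiples. Both of your shortcuts are sound (the upper bound $(2^m-1)(1+2^{m-2})<n$ holds for $m\ge 4$, so the symmetry reduction is legitimate for $m\ge 5$), and they replace most of the paper's bit-level bookkeeping with two structural observations; the trade-off is that the paper's brute-force digit analysis generalizes more mechanically to the even-$m$ situation of Proposition~\ref{PME}, where the relevant $\ell$ are not symmetric about a fixed point and the concatenation trick does not apply verbatim. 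One small point worth making explicit in a final write-up: the reduction to $4\nmid i$ and $4\nmid j$ should be justified by $C_{4i}=C_i$ and $C_{-2\cdot 4j}=C_{-2j}$ together with $i/4,\,j/4\le\delta-1$, so that the activation thresholds $\max(i,j)\le\delta-1$ are computed on the reduced representatives; you use this implicitly.
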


\begin{proof}
Let $$a \in J^+(\delta) \bigcap J^-(\delta).$$ Then there are integers $i$ and $j$ with $1 \le i, j \le 2^m-1$ such that $a \in C_i=C_{-qj}$.
Thus $i \equiv -2j4^\ell \pmod n$ and
\begin{equation} \label{IJO} i+j2^{2\ell+1} \equiv 0 \pmod n \end{equation} for some integer $\ell$ with $0 \le \ell \le m-1$

For $1 \le i, j \le 2^m-1$ and $4 \nmid i, j$, we have the following two $2$-adic expansions
$$i=i_{m-1}2^{m-1}+i_{m-2}2^{m-2}+\cdots+i_1 2+i_0$$
and
$$j=j_{m-1}2^{m-1}+j_{m-2}2^{m-2}+\cdots+j_1 2+j_0,$$
where $i_t,j_t=0 \text{ or } 1$, $(i_1, i_0) \ne (0,0)$, and $(j_1, j_0) \ne (0,0)$.

\emph{Case 1:} When $\ell \le \frac {m-3} 2$, we have $m+2\ell \le 2m-3$ and $$i+j2^{2\ell+1} <n,$$
so \eqref{IJE} is impossible.

\emph{Case 2:} When $\ell=\frac {m-1} 2$, we have $i+j2^{2\ell+1} \equiv \Delta \pmod n$, where
\begin{eqnarray*}\Delta&=&j_{m-2}2^{2m-2}+(j_{m-3}-j_{m-1})2^{2m-3}+j_{m-4}2^{2m-4}+(j_{m-5}-j_{m-1})2^{2m-5}+ \cdots+
j_12^{m+1}  \\ && +(j_0-j_{m-1})2^m+i_{m-1}2^{m-1}+(i_{m-2}-j_{m-1})2^{m-2}+i_{m-3}2^{m-3}+\cdots+i_2 2^2+(i_1-j_{m-1})2+i_0.\end{eqnarray*}
It is clear from \eqref{IJO} that $\Delta=0 \text{ or } n$.

If $\Delta=0$, then we have
$$i_0=i_2=\cdots=i_{m-1}=0, \ \ i_1-j_{m-1}=i_3-j_{m-1}=\cdots=i_{m-2}-j_{m-1}=0,$$
$$j_1=j_3=\cdots=j_{m-2}=0, \ \ j_0-j_{m-1}=j_2-j_{m-1}=\cdots=j_{m-3}-j_{m-1}=0.$$
Thus
$$i_1=i_3=\cdots=i_{m-2}=1 \text{ and } j_0=j_2=\cdots=j_{m-1}=1.$$
It follows that
$$i=\frac {2^m-2} 3 \text{ and } j=\frac {2^{m+1}-1} 3.$$

If $\Delta=n$, then we have
$$i_0=i_2=\cdots=i_{m-1}=1, \ \ i_1-j_{m-1}=i_3-j_{m-1}=\cdots=i_{m-2}-j_{m-1}=0,$$
$$j_1=j_3=\cdots=j_{m-2}=1, \ \  j_0-j_{m-1}=j_2-j_{m-1}=\cdots=j_{m-3}-j_{m-1}=0.$$
\begin{itemize}
  \item For $i_1=1$, one can see that
  $$i_3=i_5=\cdots=i_{m-2}=1 \text{ and } j_0=i_2=\cdots=j_{m-3}=j_{m-1}=1.$$
  This leads to
  $$i=j=2^m-1.$$
  \item For $i_1=0$, it is easy to obtain that
  $$i_3=i_5=\cdots=i_{m-2}=0 \text{ and } j_0=i_2=\cdots=j_{m-3}=j_{m-1}=0.$$
  It then follows that
  $$i=\frac {2^{m+1}-1} 3 \text{ and } j=\frac {2^m-2} 3.$$
\end{itemize}

\emph{Case 3:} When $\ell \le \frac {m+1} 2$, let $\ell=\frac {m-1} 2+\epsilon$, where $1 \le \epsilon \le \frac {m-1} 2$.
In this case, one can show that $i+j2^{2\ell+1} \equiv \Delta \pmod n$, where
\begin{eqnarray*}\Delta&=&j_{m-2\epsilon-2}2^{2m-2}+\cdots+j_02^{m+2\epsilon}+i_{m-1}2^{m-1}+\cdots+i_{2\epsilon+1}2^{2\epsilon+1}+
(i_2\epsilon+j_{m-1})2^{2\epsilon}\\ &&+\cdots+(i_1+j_{m-2\epsilon})2+(i_0+j_{m-2\epsilon-1})<n.\end{eqnarray*}
Thus \eqref{IJO} does not hold.

By the discussions in Cases 1, 2, and 3, we have
$$J^+(\delta) \bigcap J^-(\delta)=\begin{cases}
\emptyset, & \text{ if } 2 \le \delta \le \frac {2^{m+1}-1} 3; \\
C_{\frac {2^m-2} 3} \cup C_{\frac {2^{m+1}-1} 3}, & \text{ if } \frac {2^{m+1}-1} 3+1 \le \delta \le 2^m-1; \\
C_{\frac {2^m-2} 3} \cup C_{\frac {2^{m+1}-1} 3} \cup C_{2^m-1}, & \text{ if } \delta=2^m, \end{cases}$$
where the unions are disjoint. The desired conclusion then follows from Lemma \ref{LMO}.
\end{proof}

\emph{B. Parameters of cyclic Hermitian LCD  codes $\mathcal C$}

Recall that $Q=4$ and $n=\frac {4^m-1} 3$. Let $\mathcal C$ be the cyclic Hermitian LCD  code of length $n$ with generator polynomial $g(x)$ given by $\eqref{G2}$.
Based on the preparations above on $4$-cyclotomic cosets modulo $n$, the dimension of the code $\mathcal C$ can be deduced.

\begin{theorem} \label{thm-QUA} Let $m \ge 2$ and $2 \le \delta \le 2^m$. Let $\mathcal C$ be the quaternary cyclic Hermitian LCD  code of length $n$ with generator polynomial $g(x)$ given by $\eqref{G2}$.
Then we have the following.
\begin{enumerate}
  \item When $m \ge 2$ is even, then $\mathcal C$ has length $n=\frac {4^m-1} 3$, dimension $k$, and minimum distance $d \ge \delta+1+\lfloor \frac {\delta-1} 2 \rfloor$, where
$$k=\begin{cases}
\frac {4^m-1} 3-2(\delta-\lfloor \frac {\delta-1} 4\rfloor-1)m-1, & \text{ if } 2 \le \delta \le \frac {2^{m+1} -2} 3; \\
\frac {4^m-1} 3-2(\delta-\lfloor \frac {\delta-1} 4\rfloor-2)m-1, & \text{ if } \delta=\frac {2^{m+1}+1} 3; \\
\frac {4^m-1} 3-2(\delta-\lfloor \frac {\delta-1} 4\rfloor-3)m-1, & \text{ if } \frac {2^{m+1}+1} 3+1 \le \delta \le \frac {2^{m+1}+2^{m-1}-1} 3; \\
\frac {4^m-1} 3-2(\delta-\lfloor \frac {\delta-1} 4\rfloor-4)m-1, & \text{ if } \frac {2^{m+1}+2^{m-1}-1} 3+1 \le \delta \le 2^m. \end{cases}$$
  \item  When $m \ge 5$ is odd, then $\mathcal C$ has length $n=\frac {4^m-1} 3$, dimension $k$, and minimum distance $d \ge \delta+1+\lfloor \frac {\delta-1} 2 \rfloor$, where
$$k=\begin{cases}
\frac {4^m-1} 3-2(\delta-\lfloor \frac {\delta-1} 4\rfloor-1)m-1, & \text{ if } 2 \le \delta \le \frac {2^{m+1} -1} 3; \\
\frac {4^m-1} 3-2(\delta-\lfloor \frac {\delta-1} 4\rfloor-2)m-1, & \text{ if } \delta=\frac {2^{m+1}+2} 3; \\
\frac {4^m-1} 3-2(\delta-\lfloor \frac {\delta-1} 4\rfloor-3)m-1, & \text{ if } \frac {2^{m+1}+2} 3+1 \le \delta \le \frac {2^{m+1}+2^{m-1}+1} 3; \\
\frac {4^m-1} 3-2(\delta-\lfloor \frac {\delta-1} 4\rfloor-4)m-1, & \text{ if } \frac {2^{m+1}+2^{m-1}+1} 3+1 \le \delta \le 2^m-1; \\
\frac {4^m-1} 3-2(\delta-\lfloor \frac {\delta-1} 4\rfloor-\frac 9 2)m-1, & \text{ if } \delta=2^m. \end{cases}$$
\end{enumerate}
\end{theorem}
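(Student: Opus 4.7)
The plan is to proceed exactly as in Theorem~\ref{T1}. Since the defining set of $\mathcal C$ decomposes as $S = \{0\} \cup J^+(\delta) \cup J^-(\delta)$, the generator polynomial satisfies
$$\deg g(x) \;=\; 1 + |J^+(\delta)| + |J^-(\delta)| - |J^+(\delta) \cap J^-(\delta)|,$$
and the dimension is $k = n - \deg g(x)$. So the job reduces to evaluating the three cardinalities on the right and verifying a BCH-style consecutive run in $S$.

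First I would compute $|J^+(\delta)|$. The number of integers in $\{1,\dots,\delta-1\}$ not divisible by $4$ is $N := \delta - 1 - \lfloor (\delta-1)/4 \rfloor$; by Lemmas~\ref{LME}--\ref{LMO}, each such $i$ is a coset leader with $|C_i| = m$ except for the explicit exceptional values identified there (one exception when $m$ is even, two when $m$ is odd, all at most $2^m$). Letting $\epsilon^+(\delta)$ be the number of these exceptions lying in $\{1,\dots,\delta-1\}$, we obtain $|J^+(\delta)| = (N - \epsilon^+(\delta))\,m$, and the cutoffs of the case split in the theorem are precisely the $\delta$-values at which $\epsilon^+$ increases by one.

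Next, since $\gcd(2,n)=1$, the assignment $C_i \mapsto C_{-2i}$ is a cardinality-preserving bijection between distinct $Q$-cyclotomic cosets: $C_{-2i} = C_{-2j}$ iff $i \equiv j\,Q^\ell \pmod n$ iff $C_i = C_j$, exactly as in Lemma~\ref{LCC}. Hence $|J^-(\delta)| = |J^+(\delta)|$. For the intersection I would invoke Propositions~\ref{PME} and~\ref{PMO} directly. Substituting all three pieces into
$$k \;=\; n - 1 - 2(N - \epsilon^+(\delta))\,m + |J^+(\delta) \cap J^-(\delta)|$$
and splitting into the four (resp.\ five) subcases yields the stated dimension formulas; the seemingly anomalous $\tfrac{9}{2}$ coefficient on the last odd-$m$ line reflects exactly the combination $\epsilon^+ = 2$ with $|J^+ \cap J^-| = 3m$ (rather than $4m$) at $\delta = 2^m$, i.e., the two corrections only partially cancel.

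For the minimum distance I would mimic the trick from Theorem~\ref{T1}: for any even $i = 2k$ in $\{2,4,\dots,2\lfloor(\delta-1)/2\rfloor\}$, the identity $-2i = -4k \equiv Q\cdot(-k) \pmod n$ gives $C_{-2i} = C_{-k}$, so the residues $-1, -2, \dots, -\lfloor(\delta-1)/2\rfloor$ all lie in $S$. Together with $\{0,1,\dots,\delta-1\}\subset S$ this yields $\delta + \lfloor(\delta-1)/2\rfloor$ consecutive elements in $S$, and the BCH bound (Lemma~\ref{LBCH}) gives $d \ge \delta + 1 + \lfloor(\delta-1)/2\rfloor$. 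The main obstacle throughout is purely the bookkeeping in the dimension count—matching the positions of the exceptional non-leaders from Lemmas~\ref{LME}/\ref{LMO} with the intersection jumps in Propositions~\ref{PME}/\ref{PMO} across the boundary values of $\delta$—but no new structural ingredient beyond the cited results is required.
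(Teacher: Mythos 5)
Your proposal is correct and follows essentially the same route as the paper: the degree formula $\deg g = 1 + |J^+(\delta)| + |J^-(\delta)| - |J^+(\delta)\cap J^-(\delta)|$, the counts of non-leaders from Lemmas~\ref{LME}--\ref{LMO}, the intersection sizes from Propositions~\ref{PME}--\ref{PMO}, and the run $\{-\lfloor\frac{\delta-1}{2}\rfloor,\ldots,-1,0,1,\ldots,\delta-1\}$ obtained from $C_{-2i}=C_{-i/2}$ for even $i$ feeding the BCH bound. One small slip in the prose: the cutoffs in the dimension formula come from the jumps of \emph{both} $\epsilon^+$ and $|J^+\cap J^-|$ (for even $m$ only one of the three cutoffs is due to $\epsilon^+$), but your final bookkeeping, including the $\tfrac92$ term at $\delta=2^m$, accounts for this correctly.
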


\begin{proof}                                                                                                                                                                                                                                   One can  check that the following hold:
\begin{itemize}
  \item $|C_{n-2i}|=|C_i|$,
  \item $C_{n -2i}=C_{n-2j}$ if and only if $C_i=C_j$,
  \item $C_{n -2Qi}=C_{n -2i}$.
\end{itemize}

When $m$ is even, by Lemma \ref{LME}, we have
$$|J^+(\delta)|=|J^-(\delta)|=\begin{cases}
(\delta-1-\lfloor \frac {\delta-1} 4\rfloor)m, & \text{ if } 2 \le \delta \le \frac {2^{m+1} +1} 3; \\
(\delta-2-\lfloor \frac {\delta-1} 4\rfloor)m, & \text{ if } \frac {2^{m+1}+1} 3+1 \le \delta \le 2^m. \end{cases}$$
The degree of generator polynomial $g(x)$ of $\mathcal C$ is equal to
$$1+|J^+(\delta)|+|J^-(\delta)|-|J^+(\delta) \bigcap J^-(\delta)|.$$
The dimension $k$ of the code $\mathcal C$ then follows from Proposition \ref{PME}.
It is known that $C_{n-2i}=C_{n-i/2}$ if $2 \mid i$. Then the set
$$\{n-\lfloor \frac {\delta-1} 2 \rfloor, n -\lfloor \frac {\delta-1} 2 \rfloor+1, \ldots,  n-1, 0, n+1, n+2,\ldots, n+\delta-1\}$$ is
contained in the defining set of $\mathcal C$. The desired conclusion on the minimum distance is
then derived from Lemma \ref{LBCH}.

For odd $m$, the dimension $k$ of the code $\mathcal C$ follows from Lemma \ref{LMO} and Proposition \ref{PMO} and
the lower bound on the minimum distance can be similarly obtained.
\end{proof}

\section{Hermitian orthogonal direct sum masking}

In this section, we present an application of Hermitian LCD codes in Hermitian orthogonal direct sum masking (ODSM),
which can ensure a protection against fault injection attack (FIA). It should be
remarked that ODSM was introduced in \cite{BCCGM, CG}.

Let $\gf(Q)^n$ be a vector space over $\gf(Q)$ of dimension $n$ and
$$\gf(Q)^n=\mathcal C \oplus \mathcal D$$
for two  supplementary vector subspaces $\mathcal C$ and $\mathcal D$.
Suppose that the dimensions of $\mathcal C$ and $\mathcal D$ are $k$ and $n-k$.
For a sensitive data $x$ with $k$ information symbols, let $c=xG$, where $G$ is a
generator matrix of $\mathcal C$. To protect $x$, $n-k$ random information symbols are required, which are denoted by
$y$. Let $G'$ be a generator matrix of $\mathcal D$. Denote
$$z=c + d,$$
where $c=xG$ is called the coded sensitive data and $d=yG'$ is called the mask. In fact, $x \in \gf(Q)^k$ and $y \in \gf(Q)^{n-k}$.

The question is how to  recover $x$ from the state $z$. When $\mathcal C$ is a Hermitian LCD code, i.e., $\mathcal D=\mathcal C^{\perp_\h}$,
$G'=H$, where $H$ is the check matrix of $\mathcal C$. In this case, we then have $GH^{\perp_\h}=0$ and
\begin{equation} \label{zxy-DS} z=xG + yH.\end{equation}
Hence, the sensitive $x$ and random $y$ are recovered from $z$ as follows:
\begin{equation} \label{zx-DS} x=zG^{\perp_\h} (GG^{\perp_\h})^{-1}, \end{equation}
\begin{equation} \label{zy-DS} y=zH^{\perp_\h} (HH^{\perp_\h})^{-1}. \end{equation}

Given a FIA, the state $z$ may be modified into $z+\epsilon$. By \eqref{zxy-DS}, we similarly have
$\epsilon=eG+fH$, where $e \in \gf(Q)^k$ and $f \in \gf(Q)^{n-k}$. The task now is to check whether or not the modification appears.
For the purpose of the protection of the sensitive data $x$, we should avoid computing $x$ by \eqref{zx-DS}. The variable $y$ is random and
 does not carry any exploitable information. Hence, the harmless verification strategy is to check the following:
 $$(z+\epsilon)H^{\perp_\h} (HH^{\perp_\h})^{-1}\stackrel{?}{=} y.$$
It is clear that the equality holds if and only if $f=0$, i.e., $\epsilon \in \mathcal C$.
We conclude that the fault can not be detected if $\epsilon \in \mathcal C \setminus \{0\}$. In fact, we have proved the following theorem.
\begin{theorem} \label{thm-FIA} Let $d$ be the minimum distance of $\mathcal C$.
The fault $\epsilon$ must be detected if its Hamming weight  is less than $d$.
\end{theorem}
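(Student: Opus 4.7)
The plan is to leverage what the preceding discussion has already established, namely that the verification check $(z+\epsilon)H^{\perp_\h}(HH^{\perp_\h})^{-1} \stackrel{?}{=} y$ returns equality if and only if $f=\mathbf{0}$, where $f$ comes from the unique decomposition $\epsilon = eG + fH$ guaranteed by the direct sum $\gf(Q)^n = \mathcal{C} \oplus \mathcal{C}^{\perp_\h}$. Thus a fault $\epsilon \neq \mathbf{0}$ goes undetected precisely when $\epsilon \in \mathcal{C} \setminus \{\mathbf{0}\}$.

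First I would restate the undetectability criterion: $\epsilon$ escapes detection if and only if $\epsilon$ is a nonzero codeword of $\mathcal{C}$. This follows immediately from the directness of the sum $\mathcal{C} \oplus \mathcal{C}^{\perp_\h}$, which makes the pair $(e,f)$ in $\epsilon = eG + fH$ unique, so that $f = \mathbf{0}$ forces $\epsilon = eG \in \mathcal{C}$.

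Next I would invoke the definition of the minimum distance $d$ of $\mathcal{C}$: every nonzero codeword of $\mathcal{C}$ has Hamming weight at least $d$. Now assume for contradiction that a nonzero fault $\epsilon$ with $\wt(\epsilon) < d$ escapes detection. Then $\epsilon \in \mathcal{C} \setminus \{\mathbf{0}\}$, whence $\wt(\epsilon) \ge d$, contradicting the hypothesis. Therefore $\epsilon \notin \mathcal{C}$, equivalently $f \neq \mathbf{0}$, and the verification check flags the modification.

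There is no serious obstacle here; the statement is essentially a one-line consequence of the Hermitian LCD direct-sum decomposition together with the minimum-distance bound on nonzero codewords. The only subtlety worth noting in the write-up is the trivial case $\epsilon = \mathbf{0}$ (no fault), which is vacuous since there is nothing to detect, so the conclusion should be phrased for nonzero $\epsilon$ with $\wt(\epsilon) < d$.
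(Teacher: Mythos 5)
Your proof is correct and follows essentially the same route as the paper, which derives the theorem directly from the observation that a fault escapes detection precisely when $\epsilon \in \mathcal C \setminus \{0\}$ and then applies the definition of minimum distance. Your added remark about the trivial case $\epsilon = \mathbf{0}$ is a reasonable clarification but does not change the argument.
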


Theorem \ref{thm-FIA} indicates that the Hermitian LCD codes with
 great minimum distances are desirable to improve the resistance against FIA.

\begin{example}
Let $\mathcal C$ be the quaternary Hermitian LCD code of parameters $[9, 2, 6]$, which was given in the first case of Example \ref{exa-1}. The generator polynomial of $\mathcal C$ is
$$g(x)=X^7 + X^6 + X^4 + X^3 + X + 1 \in \gf(4)(X)$$ and the corresponding generator matrix is
$$G=\left(\begin{array}{ccccccccc}
      1 & 1 & 0 & 1 & 1 & 0 & 1 & 1 & 0 \\
      0 & 1 & 1 & 0 & 1 & 1 & 0 & 1 & 1
      \end{array}\right)_{2 \times 9}.$$
It then follows that the check matrix of $\mathcal C$ is
$$H=\left(\begin{array}{ccccccccc}
      1 & 1 & 1 & 0 & 0 & 0 & 0 & 0 & 0 \\
      0 & 1 & 1 & 1 & 0 & 0 & 0 & 0 & 0 \\
      0 & 0 & 1 & 1 & 1 & 0 & 0 & 0 & 0 \\
      0 & 0 & 0 & 1 & 1 & 1 & 0 & 0 & 0 \\
      0 & 0 & 0 & 0 & 1 & 1 & 1 & 0 & 0 \\
      0 & 0 & 0 & 0 & 0 & 1 & 1 & 1 & 0 \\
      0 & 0 & 0 & 0 & 0 & 0 & 1 & 1 & 1
      \end{array}\right)_{7 \times 9}.$$

      Let $\gf(4)=\{0, 1, \omega, \omega^2\}$, where $\omega$ is a root of $X^2+X+1 \in \gf(2)(X)$.
      Let $x=(1,\omega)$ and $y=(1,1,1,1,1,1)$. Then
      $$z=xG+yH=(0, \omega^2, \omega^2, 0, \omega, \omega^2, 0, \omega^2, \omega^2).$$
      \begin{enumerate}
        \item If $\epsilon_1=(\omega,0,\omega,\omega,0,\omega,\omega,0,\omega)$, then
      $$(z+\epsilon_1)H^{\perp_\h} (HH^{\perp_\h})^{-1}=(1,1,1,1,1,1,1)=y.$$ In this case, $\epsilon_1$ can not be detected as $\epsilon_1 \in \mathcal C$.
        \item If $\epsilon_2=(0,1,0,0,1,0,0,0,0)$, then
      $$(z+\epsilon_2)H^{\perp_\h} (HH^{\perp_\h})^{-1}=(1,0,0,1,1,1,1)\ne y.$$ It then follows that
      $\epsilon_2$ is detected  as the Hamming weight of $\epsilon_2$ is less than $6$ (minimum distance of $\mathcal C$).
      \end{enumerate}
\end{example}

\section{Concluding remarks}

The main contributions of this paper are the following:
\begin{enumerate}
    \item A general construction of cyclic Hermitian LCD  codes via generator polynomials was given in Theorem \ref{TH}.
    \item The constructions of Hermite LCD cyclic codes of three different lengths were presented and their parameters were investigated (See Theorems \ref{thm-HOP}, \ref{T1}, \ref{thm-QUA}).
    The dimensions of Hermite LCD cyclic codes were settled and the lower bounds on their minimum distances were obtained from the BCH bound.
    \item An application of Hermitian LCD codes in Hermitian ODSM was proposed,
which can ensure a protection against FIA.
\end{enumerate}

Generally, it is an extremely hard work to determine the minimum distances of these cyclic Hermitian LCD codes.
The importance of the minimum distance of the Hermitian LCD codes was exhibited in providing the protection against FIA.
The reader is thus cordially invited to
study the minimum distances of the codes presented in this paper.

\section*{Acknowledgements}
The author is very grateful to
Prof. Cunsheng Ding for his kindly help to complete this work.

\end{document}